\pgfplotsset{compat=newest}
\newcommand{\bbE}{\mathbb{E}}\newcommand{\rme}{\mathrm{e}}
\newcommand{\sfD}{\mathsf{D}}
\newcommand{\cA}{\mathcal{A}}
\newcommand{\sfd}{\mathsf{d}}
\newcommand{\cM}{\mathcal{M}}
\newcommand{\cP}{\mathcal{P}}
\newcommand{\cS}{\mathcal{S}}
\newcommand{\cX}{\mathcal{X}}
\theoremstyle{mystyle}
\newtheorem{theorem}{Theorem}
\theoremstyle{mystyle}
\newtheorem{lemma}{Lemma}
\theoremstyle{mystyle}
\newtheorem{prop}{Proposition}
\theoremstyle{mystyle}
\theoremstyle{mystyle}
\newtheorem{definition}{Definition}
\theoremstyle{remark}
\theoremstyle{mystyle}
\theoremstyle{mystyle}
\newtheorem{exa}{Example}
\theoremstyle{mystyle}
\theoremstyle{discussion}
\theoremstyle{mystyle}
\theoremstyle{mystyle}
\begin{document}

\title{Lossy Source Coding with Focal Loss }
\author{
 \IEEEauthorblockN{Alex Dytso$^{\dagger}$ and Martina Cardone$^{*}$}
$^{\dagger}$ Qualcomm Flarion Technologies, Bridgewater,  NJ 08807, USA, odytso2@gmail.com\\
$^{*}$ University of Minnesota, Minneapolis, MN 55455, USA, mcardone@umn.edu
}

\maketitle

\begin{abstract}
Focal loss has recently gained significant popularity, particularly in tasks like object detection where it helps to address class imbalance by focusing more on hard-to-classify examples. This work 
proposes
the focal loss as a distortion measure for lossy source coding. The paper provides single-shot converse and achievability bounds. These bounds are then used to characterize the distortion-rate trade-off in the infinite blocklength, which is shown to be the same as that for the log loss case.  In the non-asymptotic case, the difference between focal loss and log loss is illustrated through a series of simulations.
\end{abstract}

\section{Introduction}
Recent advancements in the field of information theory have progressively embraced the concept of \emph{soft-reconstruction} - a paradigm wherein the decompressor outputs a probabilistic distribution rather than a discrete sample from the reconstruction alphabet. This approach is facilitated by the employment of the \emph{log loss} distortion function, more commonly recognized as \emph{cross-entropy} within the machine learning literature.

Concurrently, the machine learning community has enhanced model performance and convergence rates beyond traditional log loss methods by adopting alternative loss functions. Notably, the \emph{focal loss}, which originated in object detection, has gained widespread use across various fields. This function modifies the log loss by integrating a dynamic scaling factor that shifts training focus: it reduces emphasis on `easy examples' with acceptable error levels and increases focus on `hard examples' with significant errors.

In this work, we integrate the focal loss within a rate-distortion framework that employs soft-reconstruction. Specifically, we seek to compress a random variable $X \sim P_X$ with a finite alphabet $\mathcal{X}$. The reconstruction alphabet is given by $\cP(\cX)$, which denotes the set of all probability mass functions on $\cX$. We propose the following distortion measure:

\begin{definition}[Focal loss distortion]
\label{def:FocalLoss}
Given a \emph{weighting function} $\omega : \mathcal{X} \to \mathbb{R}_{+}$ and a \emph{focus parameter} $\gamma \geq 0$, the \emph{focal loss distortion} between an element $x \in \cX$ and its reconstruction $\hat{P} \in \cP(\mathcal{X})$ is defined as\footnote{Logarithms are assumed to be base two.}:
\begin{equation}
    \mathsf{d}(x; \hat{P}) = \underbrace{\omega(x)}_{\text{weight}} \underbrace{(1 - \hat{P}(x))^\gamma}_{\text{focal term}} \underbrace{\log \left( \frac{1}{\hat{P}(x)} \right)}_{\text{log loss}}.
\end{equation}
\end{definition}

The focal term $(1- \hat{P}(x))^\gamma$ includes a focus parameter $\gamma \ge 0$. As $\hat{P}(x) \to 1$, the loss for the symbol \(x\) becomes proportionately reduced. The focus parameter $\gamma$ controls the impact of the focal term. For $\gamma=0$, the focal loss reduces to the log loss. The term $\omega(x)$ adds additional information about the importance of the symbol $x \in \cX$. For average distortion, $\omega(x)$ can often be absorbed into the source distribution, but in Definition~\ref{def:FocalLoss} we retain it for completeness to illustrate the full structure of the focal loss.

\subsection{Literature Review}
The log loss and soft-reconstruction in the context of rate-distortion were introduced in~\cite{courtade2011multiterminal, courtade2013multiterminal}. Certain strong universality properties of the log loss were demonstrated in~\cite{no2015universality, BD-log_loss}. An axiomatic approach to justifying the log loss was undertaken in~\cite{jiao2015justification}. Single-shot bounds for lossy compression with the log loss were considered in~\cite{shkel2017single}. Error exponents for source coding with log loss were studied in~\cite{joudeh2024error}. In~\cite{wu2023soft}, the log loss was introduced to study a soft version of the Massey-Arikan guessing problem. The vector Gaussian CEO problem with the log loss was addressed in~\cite{uugur2020vector}. Connections between the log loss and the best quantizer based on the mutual information criterion were shown in~\cite{bhatt2021information}. Joint communication and sensing under the log loss for the recovery of the state was considered in~\cite{joudeh2024joint}. Joint compression and inference with log loss have been considered in~\cite{ulger2023single}. The log loss is also a special case of the \(\alpha\)-loss~\cite{liao2019tunable} studied under privacy settings; however, the focal loss coincides with $\alpha$-loss only for $\gamma=0$ and $\alpha=1$.

The focal loss was introduced in~\cite{lin2018focallossdenseobject} in the context of object detection and has since gained wide adaptation. Several generalizations of the focal loss have been proposed, e.g., modifying the focal term~\cite{li2022generalized}, making the focus parameter adaptive~\cite{ghosh2022adafocal}, or noting that 
the focal loss and the log loss are special cases of certain power series~\cite{leng2022polyloss}. For a recent survey on loss functions, the interested reader is referred to~\cite{ciampiconi2023survey}.

\subsection{Paper Contributions and Outline}
Section~\ref{sec:definitions} is dedicated to the problem formulation and relevant definitions.  Section~\ref{sec:converse} presents a single-shot converse bound and some other auxiliary results.  Section~\ref{sec:achiev} presents our achievable scheme.  Section~\ref{sec:examples} provides asymptotic bounds and some simulations.  
Finally, Section~\ref{sec:Conclusions} concludes the paper.

\section{Definitions and Single-Shot Preliminaries}
\label{sec:definitions}
In this section, we describe the single-shot approach and define its fundamental limit. 

Given a positive integer $M$ and a set $\mathcal{M} = \{1, \ldots, M \}$, a fixed-length source code of size $M$ is defined as a pair of functions $(f, g)$ such that 
\begin{align}
\text{Compressor: } f &: \mathcal{X} \to \mathcal{M}, \\
\text{Decompressor: } g &: \mathcal{M} \to \mathcal{P}(\mathcal{X}). 
\end{align}
A 
source code $(f, g)$ is an $(M, d)$-lossy source code if 
\begin{equation}
\label{eq:ExpectedDist}
\mathbb{E} \left[ \mathsf{d} (X; g(f(X))) \right] \le d \quad \text{and} \quad | \mathcal{M} | \le M,
\end{equation}
where $\mathsf{d}(\cdot ; \cdot)$ is the focal loss distortion with focus parameter $\gamma$. The fundamental limit of the single-shot approach is defined~as
\begin{equation}
\label{eq:FundLim}
d^*(M; \gamma) = \inf \{ d : (M, d)\text{-lossy source code} \}.
\end{equation}
As alluded to earlier, for the lossy source coding with average distortion\footnote{This may not be the case when considering the excess distortion.}, it is convenient to absorb the weight function $\omega(\cdot)$ into the source distribution $P_X$. In this work, we consider the parametrization of the weight function in terms of another probability distribution $Q \in \mathcal{P}(\mathcal{X})$ as follows:  
\begin{equation}
\omega(x) = \frac{Q(x)}{\sum_{a \in \mathcal{X}} P_X(a) Q(a)}, \quad x \in \mathcal{X},
\end{equation}
and define its re-weighted version with the source distribution $P_X$ as 
\begin{equation}
\label{eq:RX}
R_X(x) = P_X(x) \omega(x) = \frac{P_X(x) Q(x)}{\sum_{a \in \mathcal{X}} P_X(a) Q(a)}, \quad x \in \mathcal{X}.
\end{equation} 
All the results in this work can be stated with respect to $R_X$. Thus, moving forward, unless otherwise stated, the source distribution is given by $R_X$.  In particular, 
by absorbing $\omega(\cdot)$ into $P_X$ (i.e., by considering $R_X$ as the source distribution), the focal loss distortion in Definition~\ref{def:FocalLoss} can be simply defined~as
\begin{equation}
\label{eq:NewFocalLoss}
    \mathsf{d}(x; \hat{P}) =  (1 - \hat{P}(x))^\gamma\log \left( \frac{1}{\hat{P}(x)} \right),
\end{equation}
and the expected distortion in~\eqref{eq:ExpectedDist} is now given by 
\begin{align*}
&\mathbb{E} \left[ \mathsf{d} (X; g(f(X))) \right] \notag
\\& = \sum_{a \in \cX} R_X(a) \big( 1 - g(f(a)) \big)^\gamma \log \left( \frac{1}{g(f(a))} \right) .
\end{align*}
In the $n$-letter setting, we follow the approach set in~\cite{shkel2017single}: we consider the source $X^n$ over $\mathcal{X}^n$ and the reconstruction alphabet given by $\mathcal{P}(\mathcal{X}^n)$, i.e., the set of all probability distributions over $\mathcal{X}^n$. 
The $n$-letter distortion, after absorbing the weight function, is defined as
\begin{align}
\mathsf{d}_n(x^n; \hat{P}^n) & = \frac{\mathsf{d}(x^n; \hat{P}^n)}{n} \notag
\\& = \frac{1}{n} \big( 1 - \hat{P}^n(x^n) \big)^\gamma \log \left( \frac{1}{\hat{P}^n(x^n)} \right) , 
\end{align}
where $\hat{P}^n \in \mathcal{P}(\mathcal{X}^n)$. 
Similar to~\eqref{eq:FundLim}, we let $d_n^*(M;\gamma)$ denote the fundamental limit for the $n$-letter case. The \emph{distortion-rate function} is defined as 
\begin{equation}
\mathsf{D}(\mathsf{R}; \gamma) = \liminf_{n \to \infty} d_n^* \left( 2^{n\mathsf{R}}; \gamma \right), \quad \mathsf{R} > 0. 
\end{equation}

\section{Converse}
\label{sec:converse}
In this section, we present our converse bound. We start with some auxiliary results. 
\subsection{Auxiliary Results}
The following entropy-like quantity will be important in our derivation.
\begin{definition} Given a discrete random variable $X \sim P_X$ supported on $\cX$ and $\gamma \ge 0$, we let
\begin{equation}
\label{eq:EntropyTerm}
 H_\gamma(P_X) = \log \left( \sum_{x \in \cX} P_X(x)^{ ( 1 - P_X(x) )^\gamma } \right).
\end{equation}
\end{definition}
\begin{prop} 
The quantity $ H_\gamma(P_X)$ in~\eqref{eq:EntropyTerm} satisfies the following two basic properties:
\begin{itemize}[leftmargin=*]  
\item \emph{Monotonicity}:  $\gamma \mapsto   H_\gamma(P_X)$ is a non-decreasing function;  and
\item \emph{Non-negativity}: $H_\gamma(P_X) \ge  0$. 
\end{itemize}
\label{Prop:PropertyEntropy}
\end{prop}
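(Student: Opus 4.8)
The plan is to handle the two assertions separately, treating non-negativity first since it is essentially immediate, then devoting the bulk of the effort to monotonicity.

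For non-negativity, I would argue that each term $P_X(x)^{(1-P_X(x))^\gamma}$ is at least $P_X(x)$: indeed, since $0 \le P_X(x) \le 1$, the exponent $(1-P_X(x))^\gamma$ lies in $[0,1]$ whenever $\gamma \ge 0$, and raising a number in $[0,1]$ to a power in $[0,1]$ only increases it (equivalently, $t^s \ge t$ for $t \in [0,1]$, $s \in [0,1]$). Summing over $x$ gives $\sum_x P_X(x)^{(1-P_X(x))^\gamma} \ge \sum_x P_X(x) = 1$, so the logarithm is non-negative. A symmetric observation also shows each term is at most $1$ only when the support is a single point, which is not needed here but clarifies why $H_\gamma$ is not an entropy in the usual normalization.

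For monotonicity, I would fix $P_X$ and a symbol $x$ with $0 < P_X(x) < 1$ (symbols with $P_X(x) \in \{0,1\}$ contribute a constant, namely $0$ or $1$, independent of $\gamma$) and show that $\gamma \mapsto P_X(x)^{(1-P_X(x))^\gamma}$ is non-decreasing; monotonicity of $H_\gamma$ then follows because the sum of non-decreasing functions is non-decreasing and $\log$ is increasing. Write $p = P_X(x) \in (0,1)$ and consider $h(\gamma) = p^{(1-p)^\gamma} = \exp\!\big( (1-p)^\gamma \ln p \big)$. Differentiating, $h'(\gamma) = h(\gamma)\, \ln p \cdot (1-p)^\gamma \ln(1-p)$. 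Here $h(\gamma) > 0$, $(1-p)^\gamma > 0$, $\ln p < 0$, and $\ln(1-p) < 0$, so the product $\ln p \cdot \ln(1-p) > 0$ and hence $h'(\gamma) \ge 0$. Thus each term is non-decreasing in $\gamma$, which completes the argument.

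The only mild subtlety — and the step I would be most careful about — is the bookkeeping for boundary symbols: if $P_X(x) = 1$ the term is $1^{1} = 1$ for all $\gamma$ (using $0^0 = 1$ at $\gamma = 0$ when $p=1$, though in fact $1-p=0$ so the exponent is $0$ only at finitely many... actually here $1-p=0$ makes the exponent $0^\gamma$, which is $0$ for $\gamma>0$ and $1$ for $\gamma=0$, giving term $1$ either way), and if $P_X(x) = 0$ the symbol is simply outside the support and contributes nothing. Handling these cases explicitly, or just restricting the sum to $\supp(P_X)$ and noting the single-mass case is trivial, keeps the derivative computation clean. Everything else is a one-line calculus check, so I do not expect any real obstacle.
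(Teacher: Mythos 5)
Your proof is correct and follows essentially the same route as the paper: both reduce everything to the fact that, for fixed $t\in(0,1)$, the map $\gamma \mapsto t^{(1-t)^\gamma}$ is non-decreasing (you supply the derivative computation that the paper states without proof), and your direct non-negativity argument via $t^s \ge t$ for $t,s\in[0,1]$ is just the comparison of the term at $\gamma$ with the term at $\gamma=0$, i.e.\ the same monotonicity-plus-$H_0(P_X)=0$ reasoning the paper uses.
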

\begin{IEEEproof}
The monotonicity in $\gamma$ follows from the fact that for any fixed $t \in (0,1)$ the function $t^{(1-t)^\gamma}$ is non-decreasing in $\gamma$. 
The non-negativity follows from the fact that
\begin{equation}
H_\gamma(P_X) \ge H_0(P_X) = 0,
\end{equation}
where we have used the monotonicity property. 
This concludes the proof of Proposition~\ref{Prop:PropertyEntropy}.
\end{IEEEproof}
We will also need a maximum entropy like principle for $H_\gamma(P_X)$ in~\eqref{eq:EntropyTerm}. To this end, we define 
\begin{equation}
h_\gamma(| \cX|) =   \max_{P_X : X \in \cX}  H_\gamma(P_X), \label{eq:Max_Entropy_Quantitiy}
\end{equation}
where  $|\cX|$ denotes the cardinality of $\cX$.
The key behavior that differentiates $H_\gamma(P_X)$ in~\eqref{eq:EntropyTerm} from the classical Shannon entropy is that it remains bounded as the cardinality of $\cX$  increases. This does not appear to be a new result{\color{blue},} and it indeed holds for quantities of the form
\begin{equation} \log \left( \sum_{x \in \cX} f(P_X(x)) \right),
\end{equation} where $f: [0,1] \to [0,1]$ is continuous and has a single inflection point on $[0,1]$, as does our function $f(t) = t^{(1-t)^\gamma}$. We found several sources providing proofs of this property~\cite[Prob.~21]{mildorf_inequalities},\cite{aops_problem_solving,math_stack_max_ent}. For completeness, a full proof of the following proposition is provided in Appendix~\ref{app:MaxEntrPrinc}.
%
\begin{prop}\label{prop:max_entropy} For any $| \cX |\ge 2$ and $\gamma \geq 0$, it holds that\footnote{The notation $[a \!:\! b]$ represents the set of integers from $a$ to $b $ where $ b > a$.}
\begin{align}
    h_\gamma(| \cX|) &= \!\!\!\! \max_{ \small \begin{array}{c} d \in [1\!:\! | \cX|\!-\!1] \\  q \in [0,1]  \end{array}} \!\!\log \left( (1-q)^{q^\gamma} + d \left( \frac{q}{d} \right)^{ \left( 1 - \frac{q}{d} \right)^\gamma } \right) \label{eq:FirstClaimPropMaxEntr}\\
& \le \log \left( 1 + \rme^{ \frac{\max(1,\gamma) }{\rme}}  \right).
\label{eq:SecondClaimPropMaxEntr}
\end{align} 
\end{prop}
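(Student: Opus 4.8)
The plan is to reduce the maximization defining $h_\gamma(|\cX|)$ to the two-parameter optimization in \eqref{eq:FirstClaimPropMaxEntr} by a smoothing (``equal-variable'') argument that exploits the single inflection point of $t \mapsto t^{(1-t)^\gamma}$, and then to control the resulting expression by elementary estimates.

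\emph{Reduction.} Since $\log$ is increasing, maximizing $H_\gamma(P_X)$ over $P_X \in \simplex(\cX)$ is the same as maximizing $G(P) := \sum_{x \in \cX} f(P(x))$, where $f(t) := t^{(1-t)^\gamma}$ and $f(0) := 0$. This is continuous on the compact simplex, so a maximizer $P^\star$ exists; assume $\gamma > 0$ (for $\gamma = 0$, $f$ is the identity and both sides of \eqref{eq:FirstClaimPropMaxEntr} equal $\log 1 = 0$). A computation of $f''$ shows that $f$ has a single inflection point $c = c(\gamma) \in (0,1)$, with $f$ convex on $[0,c]$ and concave on $[c,1]$. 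List the positive entries of $P^\star$ as $p_1, \dots, p_s > 0$. Two smoothing steps now apply. (i) If two entries $p_i, p_j$ both lie in $(0,c)$, then since $f$ is convex on $[0,c]$ with $f(0) = 0$ it is superadditive and Schur-convex there, so replacing $(p_i, p_j)$ by $\big(\min(p_i + p_j, c),\, (p_i + p_j - c)^+\big)$ does not decrease $G$; iterating, we may assume at most one positive entry lies in $(0,c)$. (ii) Among the entries lying in $[c,1]$, concavity of $f$ on $[c,1]$ and Jensen's inequality show that replacing them all by their common average (still $\ge c$) does not decrease $G$. Hence a maximizer may be taken of the form: $d$ entries equal to a common value $v$, one further entry equal to $1-q$ with $q := 1 - w \in [0,1]$ (possibly $w = 0$ or $w = v$), and the rest zero, for some $d \in [1:|\cX|-1]$ --- the degenerate case $s = 1$ being strictly dominated by a two-point uniform distribution when $\gamma > 0$.

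\emph{Equality in \eqref{eq:FirstClaimPropMaxEntr}.} For such a $P^\star$, mass conservation forces $v = q/d$, so $G(P^\star) = (1-q)^{q^\gamma} + d\,(q/d)^{(1-q/d)^\gamma}$ with $(d,q)$ admissible; thus $h_\gamma(|\cX|)$ is at most the right-hand side of \eqref{eq:FirstClaimPropMaxEntr}. Conversely, every pair $(d,q)$ with $q \in [0,1]$ and $d \in [1:|\cX|-1]$ is realized by an honest distribution on $\cX$ (one atom of mass $1-q$, $d$ atoms of mass $q/d$, the remaining $|\cX| - 1 - d \ge 0$ atoms of mass zero), so the right-hand side of \eqref{eq:FirstClaimPropMaxEntr} is at most $h_\gamma(|\cX|)$; combining the two bounds proves \eqref{eq:FirstClaimPropMaxEntr}. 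It remains to establish \eqref{eq:SecondClaimPropMaxEntr}: we must show $(1-q)^{q^\gamma} + d\,(q/d)^{(1-q/d)^\gamma} \le 1 + \rme^{\max(1,\gamma)/\rme}$. Since $1-q \in [0,1]$ and $q^\gamma \ge 0$, the first term is $\le 1$. For the second, set $r := q/d \in (0,1]$, so $d = q/r$ and the term equals $q\, r^{(1-r)^\gamma - 1} \le r^{(1-r)^\gamma - 1} = \exp\!\big(((1-r)^\gamma - 1)\ln r\big)$ using $q \le 1$. If $\gamma \le 1$, then $x^\gamma \ge x$ on $[0,1]$ gives $(1-r)^\gamma - 1 \ge -r$, hence (as $\ln r \le 0$) $((1-r)^\gamma - 1)\ln r \le -r\ln r \le 1/\rme$. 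If $\gamma \ge 1$, Bernoulli's inequality gives $(1-r)^\gamma - 1 \ge -\gamma r$, hence $((1-r)^\gamma - 1)\ln r \le -\gamma r \ln r \le \gamma/\rme$. In either case the exponent is at most $\max(1,\gamma)/\rme$, so the second term is at most $\rme^{\max(1,\gamma)/\rme}$, which yields \eqref{eq:SecondClaimPropMaxEntr}.

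\emph{Main obstacle.} The delicate point is the reduction step: rigorously verifying that $f(t) = t^{(1-t)^\gamma}$ has exactly one inflection point for every $\gamma > 0$ (a calculus argument complicated by $f$ being a composition, so that $f''$ must be shown to change sign exactly once), and carefully handling the boundary cases in the smoothing steps (entries landing exactly at $c$ or at $0$, and re-averaging the block of ``large'' entries after a polarization move). Once the first part is in hand, the estimate \eqref{eq:SecondClaimPropMaxEntr} is routine.
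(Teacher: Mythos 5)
Your proposal is correct and follows essentially the same route as the paper: a smoothing/majorization argument exploiting the convex-then-concave structure of $t \mapsto t^{(1-t)^\gamma}$ to reduce the maximization to one small atom plus $d$ equal atoms (the paper's Theorem~\ref{thm:maximiziation_theorem}), followed by the Bernoulli-type estimates and the bound $\max_{r\in(0,1]}(-r\ln r)=1/\rme$ (equivalently $\max_{d>0} d^{1/d}=\rme^{1/\rme}$) for \eqref{eq:SecondClaimPropMaxEntr}. The one step you flag as the main obstacle --- verifying that $f_\gamma$ has exactly one inflection point --- is likewise left unproved in the paper (Lemma~\ref{lem:convex_concave} is ``left to the reader''), so the two arguments match in both substance and in what they defer.
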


Fig.~\ref{fig:max_entropy} shows the behavior of $h_\gamma(| \cX|)$ versus $| \cX|$.   
\begin{figure}
\centering
%
%
\definecolor{mycolor1}{rgb}{0.00000,0.44700,0.74100}%
\definecolor{mycolor2}{rgb}{0.85000,0.32500,0.09800}%
\definecolor{mycolor3}{rgb}{0.92900,0.69400,0.12500}%
\definecolor{mycolor4}{rgb}{0.49400,0.18400,0.55600}%
\begin{tikzpicture}

\begin{axis}[%
width=6cm,
height=3.5cm,
at={(1.011in,0.642in)},
scale only axis,
xmin=1,
xmax=50,
xlabel style={font=\color{white!15!black}},
xlabel={$| \cX|$},
ymin=0,
ymax=5,
ylabel = {$h_\gamma(| \cX|)$},
axis background/.style={fill=white},
xmajorgrids,
ymajorgrids,
legend style={legend cell align=left, align=left, draw=white!15!black, at={(0.9,0.57)}, font = \small }
]
\addplot [color=black, thick, solid]
  table[row sep=crcr]{%
2	0.292893218813452\\
3	0.292893218813452\\
4	0.292893218813452\\
5	0.292893218813452\\
6	0.292893218813452\\
7	0.292893218813452\\
8	0.292893218813452\\
9	0.292893218813452\\
10	0.292893218813452\\
11	0.292893218813452\\
12	0.292893218813452\\
13	0.292893218813452\\
14	0.292893218813452\\
15	0.292893218813452\\
16	0.292893218813452\\
17	0.292893218813452\\
18	0.292893218813452\\
19	0.292893218813452\\
20	0.292893218813452\\
21	0.292893218813452\\
22	0.292893218813452\\
23	0.292893218813452\\
24	0.292893218813452\\
25	0.292893218813452\\
26	0.292893218813452\\
27	0.292893218813452\\
28	0.292893218813452\\
29	0.292893218813452\\
30	0.292893218813452\\
31	0.292893218813452\\
32	0.292893218813452\\
33	0.292893218813452\\
34	0.292893218813452\\
35	0.292893218813452\\
36	0.292893218813452\\
37	0.292893218813452\\
38	0.292893218813452\\
39	0.292893218813452\\
40	0.292893218813452\\
41	0.292893218813452\\
42	0.292893218813452\\
43	0.292893218813452\\
44	0.292893218813452\\
45	0.292893218813452\\
46	0.292893218813452\\
47	0.292893218813452\\
48	0.292893218813452\\
49	0.292893218813452\\
50	0.292893218813452\\
};
\addlegendentry{$\gamma =0.5$}

\addplot [color=black, thick, dashed]
  table[row sep=crcr]{%
2	0.5\\
3	0.52832083343362\\
4	0.52832083343362\\
5	0.52832083343362\\
6	0.52832083343362\\
7	0.52832083343362\\
8	0.52832083343362\\
9	0.52832083343362\\
10	0.52832083343362\\
11	0.52832083343362\\
12	0.52832083343362\\
13	0.52832083343362\\
14	0.52832083343362\\
15	0.52832083343362\\
16	0.52832083343362\\
17	0.52832083343362\\
18	0.52832083343362\\
19	0.52832083343362\\
20	0.52832083343362\\
21	0.52832083343362\\
22	0.52832083343362\\
23	0.52832083343362\\
24	0.52832083343362\\
25	0.52832083343362\\
26	0.52832083343362\\
27	0.52832083343362\\
28	0.52832083343362\\
29	0.52832083343362\\
30	0.52832083343362\\
31	0.52832083343362\\
32	0.52832083343362\\
33	0.52832083343362\\
34	0.52832083343362\\
35	0.52832083343362\\
36	0.52832083343362\\
37	0.52832083343362\\
38	0.52832083343362\\
39	0.52832083343362\\
40	0.52832083343362\\
41	0.52832083343362\\
42	0.52832083343362\\
43	0.52832083343362\\
44	0.52832083343362\\
45	0.52832083343362\\
46	0.52832083343362\\
47	0.52832083343362\\
48	0.52832083343362\\
49	0.52832083343362\\
50	0.52832083343362\\
};
\addlegendentry{$\gamma =1$}

\addplot [color=black, thick, dotted]
  table[row sep=crcr]{%
2	0.999999046325684\\
3	1.5844858570487\\
4	1.99365757612211\\
5	2.29515808630867\\
6	2.51753620010795\\
7	2.67871921748462\\
8	2.79237372365946\\
9	2.86931838078094\\
10	2.9180591881791\\
11	2.94520915187907\\
12	2.95587160589104\\
13	2.95587160589104\\
14	2.95587160589104\\
15	2.95587160589104\\
16	2.95587160589104\\
17	2.95587160589104\\
18	2.95587160589104\\
19	2.95587160589104\\
20	2.95587160589104\\
21	2.95587160589104\\
22	2.95587160589104\\
23	2.95587160589104\\
24	2.95587160589104\\
25	2.95587160589104\\
26	2.95587160589104\\
27	2.95587160589104\\
28	2.95587160589104\\
29	2.95587160589104\\
30	2.95587160589104\\
31	2.95587160589104\\
32	2.95587160589104\\
33	2.95587160589104\\
34	2.95587160589104\\
35	2.95587160589104\\
36	2.95587160589104\\
37	2.95587160589104\\
38	2.95587160589104\\
39	2.95587160589104\\
40	2.95587160589104\\
41	2.95587160589104\\
42	2.95587160589104\\
43	2.95587160589104\\
44	2.95587160589104\\
45	2.95587160589104\\
46	2.95587160589104\\
47	2.95587160589104\\
48	2.95587160589104\\
49	2.95587160589104\\
50	2.95587160589104\\
};
\addlegendentry{$\gamma=20$}

\addplot [color=black, thick, dash dot]
  table[row sep=crcr]{%
2	1\\
3	1.58496250072116\\
4	1.99999999999936\\
5	2.32192809441438\\
6	2.58496246950857\\
7	2.80735435501138\\
8	2.99999523649527\\
9	3.16990069077894\\
10	3.32183985982713\\
11	3.45918058249838\\
12	3.58436597582024\\
13	3.69920364907841\\
14	3.80505265827097\\
15	3.90295048047733\\
16	3.99370221738772\\
17	4.07794452864719\\
18	4.15619186066929\\
19	4.22886994294639\\
20	4.29633999327239\\
21	4.35891608869335\\
22	4.41687756856133\\
23	4.47047778510773\\
24	4.51995024434857\\
25	4.56551285273497\\
26	4.60737081992397\\
27	4.64571867930445\\
28	4.68074157493267\\
29	4.71261622817619\\
30	4.7415114676573\\
31	4.76758871641806\\
32	4.79100226852174\\
33	4.81189952374772\\
34	4.83042116978648\\
35	4.84670138239203\\
36	4.86086798942739\\
37	4.87304264124267\\
38	4.88334102008057\\
39	4.8918729728088\\
40	4.89874278706781\\
41	4.90404931626217\\
42	4.90788624971978\\
43	4.91034225935592\\
44	4.91150127694133\\
45	4.91150127694133\\
46	4.91150127694133\\
47	4.91150127694133\\
48	4.91150127694133\\
49	4.91150127694133\\
50	4.91150127694133\\
};
\addlegendentry{$\gamma = 100$}

\end{axis}

\end{tikzpicture}%
\caption{Examples of $h_\gamma(| \cX|)$ in \eqref{eq:FirstClaimPropMaxEntr}.}
\label{fig:max_entropy}
\end{figure}

\subsection{Converse Bound} 
\begin{theorem}
\label{thm:ConverseBound}
It holds that
\begin{equation}
d^*(M;\gamma)  \ge   \left [ H(X) - \log(M)-h_\gamma(| \cX|)\right ]^+, \label{eq:converse}
\end{equation}
where $X \sim R_X(x)$ with $R_X$ being defined in~\eqref{eq:RX}. 
\end{theorem}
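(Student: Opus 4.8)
\emph{Proof proposal.} The plan is to reduce the focal-loss converse to the classical log-loss converse of~\cite{shkel2017single}, paying a bounded additive price controlled by $h_\gamma(|\cX|)$. The key is the algebraic identity
\begin{equation*}
\mathsf{d}(x;\hat{P}) = (1-\hat{P}(x))^\gamma \log\frac{1}{\hat{P}(x)} = \log\frac{1}{\psi(\hat{P}(x))},\qquad \psi(t):=t^{(1-t)^\gamma},
\end{equation*}
which displays the focal loss as an ordinary log loss against the \emph{sub-normalized} weights $\psi(\hat{P}(\cdot))$; by the definition of $H_\gamma$ in~\eqref{eq:EntropyTerm}, one has $\sum_{x\in\cX}\psi(\hat{P}(x))=2^{H_\gamma(\hat{P})}$.

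Concretely, fix an arbitrary $(M,d)$-lossy source code $(f,g)$, put $Y:=f(X)$ and $\hat{P}_y:=g(y)\in\cP(\cX)$. We may assume $\hat{P}_y(x)>0$ whenever $P_{X|Y}(x|y)>0$ (otherwise the conditional distortion is $+\infty$ and there is nothing to prove), so that $\tilde{Q}_y(x):=\psi(\hat{P}_y(x))\,2^{-H_\gamma(\hat{P}_y)}$ is a bona fide pmf on $\cX$. Then, using the identity above,
\begin{align*}
\expcnd{\mathsf{d}(X;\hat{P}_Y)}{Y=y}
&=\sum_{x}P_{X|Y}(x|y)\log\frac{1}{\tilde{Q}_y(x)}-H_\gamma(\hat{P}_y)\\
&=H(X|Y=y)+\kl{P_{X|Y=y}}{\tilde{Q}_y}-H_\gamma(\hat{P}_y)\\
&\ge H(X|Y=y)-h_\gamma(|\cX|),
\end{align*}
where the last inequality uses Gibbs' inequality $\kl{\cdot}{\cdot}\ge 0$ together with the maximum-entropy bound $H_\gamma(\hat{P}_y)\le h_\gamma(|\cX|)$ from Proposition~\ref{prop:max_entropy} (legitimate since $\hat{P}_y\in\cP(\cX)$).

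Averaging over $Y$ gives $d\ge\expect{\mathsf{d}(X;g(f(X)))}\ge H(X|Y)-h_\gamma(|\cX|)$, and since $Y$ takes at most $M$ values, $H(X|Y)=H(X)-I(X;Y)\ge H(X)-H(Y)\ge H(X)-\log M$. Hence every $(M,d)$-lossy source code satisfies $d\ge H(X)-\log M-h_\gamma(|\cX|)$; combining this with the trivial bound $d\ge 0$ and taking the infimum over all such codes yields~\eqref{eq:converse}.

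I do not expect a real obstacle here: the substantive content is entirely the $\psi$-tilting identity, after which the argument is the textbook log-loss converse plus the already-established fact that $h_\gamma(|\cX|)$ is finite (Proposition~\ref{prop:max_entropy}). The only point requiring a little care is the degenerate case in which some reconstruction $\hat{P}_y$ assigns zero probability to a symbol of positive conditional mass, which is handled by the reduction noted above.
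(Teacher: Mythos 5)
Your proposal is correct and follows essentially the same route as the paper: the tilting $\psi(t)=t^{(1-t)^\gamma}$ is exactly the paper's $\tilde{P}_u$, your normalized $\tilde{Q}_y$ is the paper's $\tilde{P}_u/c(u)$ with $c(u)=2^{H_\gamma(\hat{P}_u)}$, and the remaining steps (Gibbs' inequality, the bound $H_\gamma(\hat{P}_y)\le h_\gamma(|\cX|)$, and $H(X|U)\ge H(X)-\log M$) coincide with the paper's chain of (in)equalities. Your explicit handling of the degenerate case $\hat{P}_y(x)=0$ and of the $[\cdot]^+$ via $d\ge 0$ are minor points the paper leaves implicit.
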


\begin{proof}
Fix some $u \in \cM$,  let $\hat{P}_u = g(u)$, and define 
\begin{align}
\tilde{P}_u(x) &=  \left( \hat{P}_u(x) \right)^{ \left(1 -\hat{P}_u(x)  \right)^\gamma },\\
c(u) &= \sum_{x \in \cX } \tilde{P}_u(x). \label{eq:cu}
\end{align}
Then, for $(X, U) \sim R_X P_{U|X}$, we have that
\begin{align}
&\bbE \left[ \sfd( X; \hat{P}_U(X)) |U =u \right] \notag\\
&=\sum_{x \in \cX} {R}_{X|U}(x|u) \ \sfd \left(x; \hat{P}_u(x) \right) \notag \\
&= \sum_{x \in \cX} R_{X|U}(x|u)  
\left(1 -\hat{P}_u(x)  \right)^\gamma \log \left( \frac{1}{\hat{P}_u(x)} \right) \notag \\
&= \sum_{x \in \cX} R_{X|U}(x|u) 
\log \left( \frac{1}{\tilde{P}_u(x)} \right) \notag \\
& \stackrel{{\rm{(a)}}}{=} \sum_{x \in \cX}  R_{X|U}(x|u)  \log \left( \frac{R_{X|U}(x|u) }{\tilde{P}_u(x)} \right) + H \left( X| U=u\right) \notag \\
& \stackrel{{\rm{(b)}}}{=} D \left( R_{X|U}(\cdot|u)  \left \|  \frac{\tilde{P}_u}{c(u)} \right. \right ) + H \left( X |U =u \right) + \log \left( \frac{1}{c(u)} \right) \notag \\
& \stackrel{{\rm{(c)}}}{ \ge } H \left( X|U=u \right) + \log \left( \frac{1}{c(u)} \right) \notag \\
& \stackrel{{\rm{(d)}}}{=} H \left( X|U=u \right) - H_\gamma (\hat{P}_u(\cdot)) \notag \\
&    \stackrel{{\rm{(e)}}}{\ge} H \left( X|U=u \right) -   h_\gamma(| \cX|) ,
\label{eq:ConvStep2}
\end{align}
where the labeled (in)equalities follow from:
$\rm{(a)}$ multiplying and dividing the argument of the logarithm by $R_{X|U}(x|u)$;
$\rm{(b)}$ multiplying and dividing the argument of the logarithm by $c(u)$ in~\eqref{eq:cu};
 $\rm{(c)}$ the non-negativity of the relative entropy; $\rm{(d)}$ the definition of $H_\gamma $ in~\eqref{eq:EntropyTerm}; and  $\rm{(e)}$ the definition in~\eqref{eq:Max_Entropy_Quantitiy}.
 
Now, averaging \eqref{eq:ConvStep2} over $U \sim P_U$, we arrive at 
\begin{align}
\bbE \left[\sfd \left(X; g(U) \right)  \right] & = \sum_{u \in \cM}P_U(u) \ \bbE \left[\sfd \left(X; g(U) \right) |U=u \right] \notag
\\& \geq \sum_{u \in \cM}P_{U}(u) \left( H(X | U \!=\!u) \! - h_\gamma(| \cX|) \right) \notag
\\& = H(X| U) - h_\gamma(| \cX|) \notag\\
&=H(X) - I( U; X)- h_\gamma(| \cX|)  \notag \\
&\ge  H(X) - \log(M)-h_\gamma(| \cX|).
\end{align}
The proof of Theorem~\ref{thm:ConverseBound} is concluded by fixing an arbitrary $(M,d)$-lossy source code and by letting $U=f(X)$ be the compressor output.
\end{proof}

\section{Achievability}
\label{sec:achiev}

We now present our achievability bounds, which are an adaption of the achievability bounds in~\cite{shkel2017single} for the log loss. In addition, to account for the focus parameter $\gamma$, we also introduce an auxiliary distribution, denoted as $F_X$, over which we could optimize the bound. 

\begin{theorem}
\label{thm:AchievabilityBound}
If $|\cX|\leq M$, then $d^*(M;\gamma) = 0$. Otherwise, 
it holds that
\begin{align}
d^*(M;\gamma) & < \bbE_{R_X} \left [1_\cA \times \left( \frac{2^{\iota_{F_X} (X) - \log(M)}}{2^{\iota_{F_X} (X)-\log(M)}+1}\right )^\gamma  \right . \notag
\\& \qquad \qquad \left. \log \left ( 1+ 2^{\iota_{F_X} (X) - \log(M)}\right ) \right ] \label{eq:Second_Last_achiev}
\\& <\bbE_{R_X} \left [1_\cA \times  \left(1- \frac{1}{2} 2^{\log(M) - \iota_{F_X} (X)} \right )^{\gamma}  \right. \notag
\\& \qquad \qquad \left. \times \left( \iota_{F_X} (X)- \log \left( M \right) +1 \right )\right ], \label{eq:Last_achiev}
\end{align}
where: $R_X$ is defined in~\eqref{eq:RX}; $F_X$ is an arbitrary probability mass function on $\cX$;  $\cA = \{ \iota_{F_X} (X) > \log(M)\}$; and 
\begin{equation}
\label{eq:Information}
\iota_{F_X}(x) = \log \left(\frac{1}{F_X(x)} \right ). 
\end{equation}
\end{theorem}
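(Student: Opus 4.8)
## Proof Proposal for Theorem~\ref{thm:AchievabilityBound}

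My plan is to follow the random-coding / random-binning structure used for the log loss in~\cite{shkel2017single}, but with the decoder producing a carefully chosen soft reconstruction, and then bound the resulting focal loss pointwise. First, the boundary case $|\cX| \le M$ is immediate: the compressor can be injective, so the decoder can output the point mass $\hat P = \delta_{f(X)}$, giving $\hat P(X) = 1$ and hence $\mathsf{d}(X;\hat P) = (1-1)^\gamma \log 1 = 0$ for $\gamma > 0$, and also for $\gamma = 0$ by the $0\log\frac{1}{1}$ convention; so $d^*(M;\gamma) = 0$. For the main case, the plan is: (i) generate a random codebook and define a deterministic decoder map $u \mapsto \hat P_u \in \cP(\cX)$; (ii) compute/bound $\mathbb{E}_{R_X}[\mathsf{d}(X;\hat P_U(X))]$ conditioned on the codebook, where the outer expectation is over both source and codebook; (iii) show the random codebook achieves the stated bound, so some deterministic code does too.

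For the construction, following the soft-covering idea: draw $M$ i.i.d.\ codewords, but here a ``codeword'' should be thought of as inducing, for each source realization $x$, a posterior-like weight. Concretely, I would let the decoder, on receiving the bin index $u = f(x)$, output the normalized restriction of $F_X$ to the set of source symbols mapped to $u$ (this is the natural ``soft'' analogue of list decoding). Then $\hat P_u(x) = F_X(x)/F_X(f^{-1}(u))$, and $\log\frac{1}{\hat P_u(x)} = \iota_{F_X}(x) - \log\frac{1}{F_X(f^{-1}(u))}$. Using a random binning $f$ where each $x$ goes to a uniformly random bin in $\cM$, the mass $F_X(f^{-1}(u))$ concentrates: on the event $\cA = \{\iota_{F_X}(X) > \log M\}$ one expects $F_X(f^{-1}(f(x))) \approx F_X(x) + \frac{1}{M}\cdot(\text{rest}) \approx F_X(x) + \frac{1}{M}$, so $\hat P_u(x) \approx \frac{F_X(x)}{F_X(x) + 1/M} = \frac{1}{1 + 2^{\iota_{F_X}(x) - \log M}}$, which is exactly the quantity inside~\eqref{eq:Second_Last_achiev}. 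Off $\cA$, one has $\iota_{F_X}(x) \le \log M$, i.e.\ $F_X(x) \ge 1/M$, and the single term $x$ already dominates its own bin so $\hat P_u(x)$ is bounded below by a constant and the focal term makes the distortion contribution negligible — this is why the indicator $1_\cA$ appears. Plugging $\hat P_u(x) = (1 + 2^{\iota_{F_X}(x) - \log M})^{-1}$ into the focal loss $(1-\hat P_u(x))^\gamma \log\frac{1}{\hat P_u(x)}$ gives precisely the integrand of~\eqref{eq:Second_Last_achiev}. The second inequality~\eqref{eq:Last_achiev} is then a deterministic analytic bound: writing $t = 2^{\iota_{F_X}(x) - \log M} > 1$ on $\cA$, I need $\left(\frac{t}{t+1}\right)^\gamma \log(1+t) \le \left(1 - \tfrac{1}{2t}\right)^\gamma(\log t + 1)$, which should follow from $\log(1+t) \le \log t + 1$ for $t \ge 1$ (since $\log(1+t) - \log t = \log(1+1/t) \le (1/t)\log e \le 1$) together with $\frac{t}{t+1} = 1 - \frac{1}{t+1} \le 1 - \frac{1}{2t}$ for $t \ge 1$, and monotonicity of $z \mapsto z^\gamma$.

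The main obstacle I expect is step (ii)–(iii): making the ``$\approx$'' rigorous. The quantity $F_X(f^{-1}(f(x)))$ is random (depending on which other symbols land in $x$'s bin), and $\hat P_u(x) \mapsto (1-\hat P_u(x))^\gamma\log\frac{1}{\hat P_u(x)}$ is a nonlinear function of it, so I cannot simply push the expectation through. The cleanest route, mirroring~\cite{shkel2017single}, is to avoid concentration arguments entirely: bound $F_X(f^{-1}(f(x))) \le F_X(x) + F_X(\{x' \ne x : f(x') = f(x)\})$, take expectations over the binning using linearity (each other symbol is in $x$'s bin with probability $1/M$, contributing total expected mass $\le (1 - F_X(x))/M < 1/M$), and then invoke convexity (Jensen) of the relevant function of the bin-mass to move the expectation inside in the favorable direction — or, alternatively, use the fact that $\log\frac{1}{\hat P_u(x)} = \log\big(F_X(f^{-1}(u))/F_X(x)\big)$ is concave in $F_X(f^{-1}(u))$ so that $\mathbb{E}[\log\frac{1}{\hat P_u(x)} \mid X=x] \le \log\big((F_X(x) + 1/M)/F_X(x)\big) = \log(1 + 2^{\iota_{F_X}(x)-\log M})$, handling the log-loss factor; the focal factor $(1-\hat P_u(x))^\gamma \le 1$ can first be bounded by its value at the typical point via monotonicity after checking the direction, or absorbed by a second application of the appropriate convexity/monotonicity. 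I would need to be careful that the two nonlinear factors are handled consistently (ideally bounding the product by the product of the bounds, which requires the focal factor's bound to hold pointwise, not just in expectation — so I would bound $(1-\hat P_u(x))^\gamma \le \big(\tfrac{2^{\iota_{F_X}(x)-\log M}}{2^{\iota_{F_X}(x)-\log M}+1}\big)^\gamma$ requires $\hat P_u(x) \ge (1 + 2^{\iota_{F_X}(x)-\log M})^{-1}$, i.e.\ $F_X(f^{-1}(u)) \le F_X(x) + 1/M$, which is NOT deterministically true). This tension suggests the honest argument bounds the expectation of the log-loss factor alone (getting~\eqref{eq:Last_achiev}-type terms directly) and treats $\gamma$ via the strict inequality and the indicator, or uses the refined binning of~\cite{shkel2017single} where the bin mass is deterministically controlled. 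Resolving exactly which convexity inequality threads both factors is the crux; everything else is bookkeeping and the elementary inequalities listed above.
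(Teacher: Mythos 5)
Your proposal correctly identifies the target quantities, handles the $|\cX|\le M$ case exactly as the paper does, and the elementary algebra you give for passing from \eqref{eq:Second_Last_achiev} to \eqref{eq:Last_achiev} (namely $\log(1+t)\le \log t+1$ and $\tfrac{t}{t+1}\le 1-\tfrac{1}{2t}$ for $t\ge 1$, with $t=2^{\iota_{F_X}(x)-\log M}$) is sound. However, there is a genuine gap at the step you yourself flag as ``the crux,'' and the primary route you propose---random binning followed by concentration or Jensen---does not close it. As you correctly observe, the focal factor $(1-\hat P_u(x))^\gamma$ requires a \emph{pointwise} upper bound on the bin mass $F_X(f^{-1}(f(x)))$ of the form $F_X(x)+1/M$, and under uniform random binning this bound is not deterministically true; moreover the product of two nonlinear factors cannot be handled by taking expectations of each separately in a consistent direction. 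No convexity inequality resolves this. What is actually needed, and what the paper supplies, is a \emph{deterministic greedy compressor}: order $\cX$ so that $F_X$ is nonincreasing, assign the first $M$ symbols to their own bins, and assign each subsequent symbol to the bin with the smallest accumulated $F_X$-mass so far. Two pigeonhole arguments then give everything deterministically: (i) if $F_X(a)\ge 1/M$ then necessarily $a\le M$ and $a$ is \emph{alone} in its bin, so its distortion is exactly zero (this is where the indicator $1_\cA$ comes from---not from the focal term being ``negligible''); and (ii) if $F_X(a)<1/M$ and $b$ is the last symbol added to $a$'s bin, the mass of that bin excluding $b$ must be $<1/M$ (otherwise all $M$ bins would carry mass $\ge 1/M$ and $F_X(b)$ would be forced to zero), whence the bin mass is $<1/M+F_X(b)\le 1/M+F_X(a)$. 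This single deterministic inequality bounds both the log factor and the focal factor pointwise, yielding \eqref{eq:Second_Last_achiev} directly.

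In short: your diagnosis of why the naive argument fails is accurate, and you name the correct escape route (``the refined binning of~\cite{shkel2017single} where the bin mass is deterministically controlled''), but you leave it as one of several unresolved alternatives rather than carrying it out. Since the entire theorem rests on that construction and its pigeonhole analysis, the proposal as written is incomplete: the main proposed mechanism would not prove the stated bound, and the mechanism that would is only gestured at.
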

\begin{proof}
Note that if $|\cX|\leq M$, then $d^*(M;\gamma) = 0$. This is because, we can select $f(a)=a$ and $g(a)=\delta_a$, where $\delta_a$ is the point mass at $a$. Then, for every $a \in \cX$, we have that
\begin{align} 
\sfd(x; g(f(a))) &= \sfd(x; g(a)) = (1-\delta_a)^\gamma \log \frac{1}{\delta_a} = 0.
\end{align} 
Thus, we focus on the case $|\cX| > M$.

\smallskip
\noindent
{\em Compressor.} Without loss of generality, we let $\cX = \{1,\ldots,|\cX|\}$, and $F_X(a) \geq F_X(b)$ for $a \leq b$.
We define the compressor as follows:
\begin{subequations}
\label{eq:Encoder}
\begin{equation}
f(a) = a \ \text{for} \ a \in [1:M],
\end{equation}
and 
\begin{equation}
 f(a) = \arg \min_{m \in \cM} \sum_{b=1}^{a-1} F_X(b) 1 \{f(b)=m\}, 
 \end{equation}
for $a \in [M+1:|\cX|]$.
\end{subequations}

\smallskip
\noindent
{\em Decompressor.} The decompressor for $m \in \cM$ is defined by $g(m) = \hat{P}_m$, where
\begin{equation}
\hat{P}_m(a) = \left \{
\begin{array}{ll}
\frac{F_X(a)}{\sum_{x \in \cX} F_X(x)  1\{f(x)=m\}} & f(a)=m,
\\
0 & \text{otherwise.}
\end{array}
\right .
\end{equation}

\smallskip
\noindent
{\em Average distortion analysis.}
The exact average distortion of $(f,g)$ is given by
\begin{align}
&\bbE \left [ \sfd(X; g(f(X)))\right ] \notag
%
%
\\ &= \sum_{a \in \cX}R_X(a)  \left( 1-g(f(a))\right )^\gamma \log \left( \frac{1}{g(f(a))} \right ) \notag
\\& = \sum_{a \in \cX} R_X(a) \log \left(\frac{\sum_{x \in \cX} F_X(x)  1\{f(x)=f(a)\}}{F_X(a)} \right )\notag
\\ & \qquad \quad \left(1- \frac{F_X(a)}{\sum_{x \in \cX} F_X(x)  1\{f(x)=f(a)\}} \right )^\gamma \notag
\\& = \sum_{a \in \cX} R_X(a) \log \left( \frac{\mathbb{P}_{F_X} \left (f(X)=f(a) \right )}{F_X(a)} \right )
\notag
\\ & \qquad \quad \left(1- \frac{F_X(a)}{\mathbb{P}_{F_X} \left (f(X)=f(a) \right )} \right )^\gamma, \label{eq:Exect_Achiev}
\end{align}
where, in the last step, $\mathbb{P}_{F_X}$ means that the probability is computed with respect to $F_X$.

We now want to upper bound the above expression. Towards this end, assume first that $F_X(a) \geq 1/M$.
Under this assumption, we claim that $a$ is the only element assigned to $f(a)$.
To see this, note the following two facts:
(1) from the compressor, we have that $f(a)=a$ for any $a \in [1:M]$, i.e., we have that $a$ is the only element assigned to $f(a)$ for any $a \in [1:M]$;
and (2) for any $a \in [M+1:|\cX|]$ we have that
\begin{align}
1 = \sum_{x \in \cX} F_X(x) & = \sum_{i=1}^{a-1} F_X(i) + F_X(a) + \sum_{i=a+1}^{|\cX|} F_X(i) \notag
\\& \stackrel{{\rm{(a)}}}{\geq} a F_X(a) + \sum_{i=a+1}^{|\cX|} F_X(i) \notag
\\& \stackrel{{\rm{(b)}}}{\geq} (M+1) \frac{1}{M} + \sum_{i=a+1}^{|\cX|} F_X(i) >1, \label{eq:Contradiction}
\end{align}
where $\rm{(a)}$ follows since we assume that (without loss of generality) $F_X(a)\geq F_X(b)$ for $a \leq b$, and $\rm{(b)}$ is due to the fact that, by assumption, $a \geq M+1$ and $F_X(a)\geq 1/M$.
Thus, from~\eqref{eq:Contradiction} we have a contradiction, i.e., under the assumption $F_X(a)\geq 1/M$, we need to have $a\leq M$.
It therefore follows that, if $F_X(a) \geq 1/M$, then $a$ is the only element assigned to $f(a)$. Hence, under this assumption, we arrive at
\begin{equation}
\sfd(a;g(f(a))) = \sfd(a;g(a)) = 0.
\end{equation}
We now assume $F_X(a) < 1/M${\color{blue},} and we let $b \in \cX$ be the last element in $\cX$ assigned to $m=f(a)$. For this case, we have that        
\begin{align}
&\mathbb{P}_{F_X} \left [f(X)=m \right ] \notag
\\& 
=\sum_{i=1}^{b-1} F_X(i) 1 \{f(i)=m\}  + F_X(b) \label{eq:PfXmInt}
\\& \leq \min_{\substack{j \in [1:M]\\ j \neq m}} \left \{\sum_{i=1}^{b-1} F_X(i) 1 \{f(i)=j\} \right \}  + F_X(b), \label{eq:PfXm}
\end{align}
where the inequality follows from the fact that $m$ is a minimum of~\eqref{eq:Encoder}.
Now, it is not difficult to see that $\sum_{i=1}^{b-1} F_X(i) 1 \{f(i)=m\}$ in~\eqref{eq:PfXmInt} has to be smaller than $1/M$. To see this, consider the opposite case, i.e.,  assume that $\sum_{i=1}^{b-1} F_X(i) 1 \{f(i)=m\} \geq 1/M$. Then, this implies that each of the $M-1$ terms in the minimum in~\eqref{eq:PfXm} is also greater than $1/M$. Thus, we would get
\begin{align}
&\sum_{i=1}^{b-1} F_X(i) 1 \{f(i)=m\} 
\!+\!\!\!\!\!\!\! \sum_{j\in [1:M], j \neq m} \left \{\sum_{i=1}^{b-1} F_X(i) 1 \{f(i)\!=\!j\} \right \} \notag
\\& \geq \frac{1}{M} + (M-1)\frac{1}{M} =1.
\end{align}
This creates a contradiction since it would lead to $F_X(b)=0$ (since $F_X$ is a probability mass function), which is not possible since it was assumed that $b \in \cX$. It therefore follows that 
$\sum_{i=1}^{b-1} F_X(i) 1 \{f(i)=m\} < 1/M$. With this, from~\eqref{eq:PfXmInt}, 
we arrive at
\begin{align}
\mathbb{P}_{F_X} \left [f(X)=m \right ] &= \sum_{i=1}^{b-1} F_X(i) 1 \{f(i)=m\}  + F_X(b) \notag
\\& < \frac{1}{M} + F_X(b) \notag
\\& \leq \frac{1}{M} + F_X(a) \label{eq:UBPFxInt}
\\& < \frac{2}{M},
\label{eq:UBPFx}
\end{align}
where the second inequality follows since (without loss of generality)
we assume that $F_X(a) \geq F_X(b)$ for $a \leq b$, and the last inequality is due to the fact that we are considering $F_X(a)<1/M$.  
Thus, for all $a$ for which $F_X(a) < 1/M$, it holds that
\begin{align}
&R_X(a) \log \left( \frac{\mathbb{P}_{F_X} \left (f(X)=f(a) \right )}{F_X(a)} \right ) \notag
\\& \qquad \quad  \times \left(1- \frac{F_X(a)}{\mathbb{P}_{F_X} \left (f(X)=f(a) \right )} \right )^\gamma \notag
\\& \stackrel{\eqref{eq:UBPFxInt}}{<} R_X(a) \log \left (1\!+\!\frac{1}{M F_X(a)} \right )\left( 1\!-\!\frac{M F_X(a)}{1\!+\! MF_X(a)}\right )^\gamma
\\& \stackrel{\eqref{eq:UBPFx}}{<} R_X(a) \log \left( \frac{2/M}{F_X(a)} \right ) \left(1-\frac{F_X(a)}{2/M} \right )^\gamma.
\end{align}
We, therefore, arrive at
\begin{align}
& d^*(M;\gamma)  \notag
\\ & < \!\bbE_{R_X} \!\!\left [\!1_\cA \! \log \!\left (1\!+\!\frac{1}{M F_X(X)} \!\right )\!\!\left( 1\!-\!\frac{M F_X(X)}{1\!+\! MF_X(X)}\!\right )^\gamma \right ]
\\ & <\bbE_{R_X} \left [1_\cA  \log \left( \frac{2/M}{F_X(X)} \right ) \left(1-\frac{F_X(X)}{2/M} \right )^\gamma  \right ],
\end{align}
where we let
\begin{equation}
\cA = \{ F_X(X)<1/M\}. 
\end{equation}
The proof of Theorem~\ref{thm:AchievabilityBound} is concluded by using~\eqref{eq:Information} in the above. 
\end{proof}

\section{Asymptotics, Comparisons, and Evaluations}
\label{sec:examples}

\subsection{Asymptotics}
We now discuss the asymptotic fundamental limits for a stationary memoryless source.  
\begin{theorem}\label{thm:assymptotics}
Suppose that $R_{X^n}= \prod_{i=1}^n R_X(x_i)$ for all $n$. Then,  for  $\gamma \ge 0$ and $\mathsf{R} \ge0 $
\begin{equation}
 \sfD (\mathsf{R};\gamma) = [ H(X) - \mathsf{R} ]^{+},    
\end{equation}
where $X \sim R_X$. 
\end{theorem}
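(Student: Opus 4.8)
The plan is to sandwich the $n$-letter fundamental limit $d_n^*(2^{n\mathsf{R}};\gamma)$ between the single-shot converse of Theorem~\ref{thm:ConverseBound} and the single-shot achievability of Theorem~\ref{thm:AchievabilityBound}, both applied to the i.i.d.\ source $X^n\sim R_{X^n}=\prod_{i=1}^n R_X$ over $\cX^n$ with reconstruction alphabet $\cP(\cX^n)$ and $M=2^{n\mathsf{R}}$, and then to let $n\to\infty$. The one structural fact that does the real work is that the ``entropy defect'' $h_\gamma(|\cX^n|)=h_\gamma(|\cX|^n)$ is bounded uniformly in $n$: by Proposition~\ref{prop:max_entropy}, $h_\gamma(|\cX|^n)\le C:=\log\!\big(1+\rme^{\max(1,\gamma)/\rme}\big)$, a constant; after normalizing the distortion by $n$ this term disappears, which is precisely why the answer coincides with the log-loss ($\gamma=0$) case.

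\emph{Converse.} Since $d_n^*$ is defined with the normalized distortion $\sfd_n=\sfd/n$, Theorem~\ref{thm:ConverseBound} applied to $X^n$ gives $n\,d_n^*(2^{n\mathsf{R}};\gamma)\ge\big[H(X^n)-n\mathsf{R}-h_\gamma(|\cX|^n)\big]^+$. Using $H(X^n)=nH(X)$ and $h_\gamma(|\cX|^n)\le C$, this reads $d_n^*(2^{n\mathsf{R}};\gamma)\ge\tfrac1n\big[n(H(X)-\mathsf{R})-C\big]^+$, and since $\tfrac1n[na-C]^+\to a^+$ for every real $a$, taking $\liminf_{n\to\infty}$ yields $\sfD(\mathsf{R};\gamma)\ge[H(X)-\mathsf{R}]^+$. (If $|\cX|=1$ everything is trivial, so assume $|\cX|\ge2$.)

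\emph{Achievability.} Here I would instantiate Theorem~\ref{thm:AchievabilityBound} with the product test distribution $F_{X^n}=\prod_{i=1}^n R_X$, so $\iota_{F_{X^n}}(X^n)=\sum_{i=1}^n\log\frac{1}{R_X(X_i)}=:S_n$ is an i.i.d.\ sum with $\bbE[S_n]=nH(X)$ and finite variance $n\sigma^2$, $\sigma^2=\Var\!\big(\log\tfrac1{R_X(X)}\big)<\infty$ (finite alphabet). When $|\cX|^n\le M$ the theorem gives $d_n^*=0$, consistent with the claim since then $\mathsf{R}\ge\log|\cX|\ge H(X)$; otherwise, on the event $\cA=\{S_n>n\mathsf{R}\}$ the focal factor in~\eqref{eq:Last_achiev}, i.e.\ $\big(1-\tfrac12\,2^{n\mathsf{R}-S_n}\big)^\gamma$, has base in $(\tfrac12,1)$ and hence is $\le1$; dropping it and dividing by $n$, \eqref{eq:Last_achiev} gives $d_n^*(2^{n\mathsf{R}};\gamma)<\tfrac1n\bbE\big[1_\cA\,(S_n-n\mathsf{R}+1)\big]\le\bbE\big[(\tfrac{S_n}{n}-\mathsf{R})^+\big]+\tfrac1n$. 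By the strong law, $\tfrac{S_n}{n}\to H(X)$ a.s., so $(\tfrac{S_n}{n}-\mathsf{R})^+\to[H(X)-\mathsf{R}]^+$ a.s.; and since $0\le(\tfrac{S_n}{n}-\mathsf{R})^+\le\tfrac{S_n}{n}+|\mathsf{R}|$ with $\sup_n\bbE[(\tfrac{S_n}{n})^2]<\infty$, the family is uniformly integrable, so the convergence also holds in $L^1$. Hence $\limsup_{n\to\infty}d_n^*(2^{n\mathsf{R}};\gamma)\le[H(X)-\mathsf{R}]^+$, and combined with the converse, $\lim_{n\to\infty}d_n^*(2^{n\mathsf{R}};\gamma)=[H(X)-\mathsf{R}]^+=\sfD(\mathsf{R};\gamma)$.

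The only delicate point is passing the expectation to the limit on the achievability side: the weak law alone does not suffice, so one needs the bounded-second-moment (uniform integrability) argument, and it must be handled with some care at the corner $\mathsf{R}=H(X)$, where the target is $0$ and the expectation in fact decays like $O(1/\sqrt n)$ by the central limit theorem. Everything else — the bound on the focal factor, discarding $h_\gamma(|\cX|^n)/n\to0$, and the translation between normalized and unnormalized distortion — follows immediately from results already in the paper.
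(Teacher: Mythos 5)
Your proof is correct and follows essentially the same route as the paper: the single-shot converse of Theorem~\ref{thm:ConverseBound} together with $h_\gamma(|\cX|^n)/n \to 0$ from Proposition~\ref{prop:max_entropy}, and the achievability bound of Theorem~\ref{thm:AchievabilityBound} with the product choice $F_{X^n}=\prod_i R_X(x_i)$, the focal factor bounded by one, and the law of large numbers. The only substantive difference is that you explicitly supply the uniform-integrability (bounded second moment) argument needed to pass the expectation to the limit, a step the paper leaves implicit under the phrase ``the law of large numbers.''
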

\begin{IEEEproof}
    First, from the converse bound in Theorem~\ref{thm:ConverseBound} we have that
    \begin{align}
    \sfD (\mathsf{R};\gamma) &\ge \liminf_{n \to \infty} \frac{\left [ H(X^n) - \log(2^{n\mathsf{R}})-h_\gamma(| \cX^n|)\right ]^+}{n} \notag \\
    &= [ H(X) - \mathsf{R}]^{+},
    \end{align}
    where the fact that $\liminf_{n \to \infty} \frac{h_\gamma(| \cX^n|)}{n}=0$ follows from Proposition~\ref{prop:max_entropy}. 

For the upper bound  in Theorem~\ref{thm:AchievabilityBound}, we can choose $F_{X^n}(x^n) = \prod_{i=1}^n F_{X}(x_i)$ and observe that 
    \begin{align}
    & \liminf_{n \to \infty}  \frac{1}{n}\bbE_{R_{X^n} } \left [1_\cA \times  \left(1- \frac{1}{2} 2^{\log(2^{n\mathsf{R}}) - \iota_{F_{X^n}  } (X^n)} \right )^{\gamma}  \right. \notag
\\& \qquad \qquad \left. \times \left( \iota_{F_{X^n}} (X^n)- \log \left( 2^{n\mathsf{R}} \right) +1 \right )\right ] \notag \\
  & \le  \liminf_{n \to \infty}  \frac{1}{n}\bbE_{R_{X^n} } \left [1_\cA      \times \left( \iota_{F_{X^n}} (X^n)- n\mathsf{R} +1 \right )\right ] \notag\\
  &= \left[ \bbE_{R_{X} } \left [      \iota_{F_{X}} (X) \right ]- \mathsf{R}    \right]^{+} ,
    \end{align}
    where the inequality follows from bounding the focal term with $\left(1- \frac{1}{2} 2^{\log(2^{n\mathsf{R}}) - \iota_{F_{X^n}  } (X^n)} \right )^{\gamma} \le 1$, and the last equality follows from the law of large numbers. 
To complete the proof note that the optimal choice is $F_X= R_X$; this is due to Gibbs' inequality, that is,
    \begin{equation}
        \bbE_{R_{X} } \left [      \iota_{F_{X}} (X) \right ] \ge \bbE_{R_{X} } \left [      \iota_{R_{X}} (X) \right ].
    \end{equation}
    This concludes the proof of Theorem~\ref{thm:assymptotics}.
\end{IEEEproof}

From Theorem~\ref{thm:assymptotics}, we make the following observations. First, the focal term has no impact asymptotically. This, as it will be shown in the next subsection, is not true in the finite blocklength regime. Second, asymptotically, the optimal choice of $F_X$ is to set it equal to the re-weighted source distribution $R_X$. As it will also be shown next, this is not necessarily true in the finite blocklength regime.

\subsection{Examples}
We here produce several examples to illustrate the impact of the focal term in the finite blocklength regime. 

\begin{exa} \label{examp:M2X3} Case $M=2$ and $|\cX| =3$.     For this case the average distortion can be accurately computed with an exhaustive search.  Fig.~\ref{fig:example_M2_X3} demonstrates two examples of $d^*(2;\gamma)$. 
\end{exa}

\begin{figure}[h]
\centering
%
%
\begin{tikzpicture}

\begin{axis}[%
width=6cm,
height=3.4cm,
at={(1.011in,0.642in)},
scale only axis,
xmin=0,
xmax=10,
xlabel style={font=\color{white!15!black}},
xlabel={$\gamma$},
ymin=0,
ymax=0.8,
ylabel = {$d^*(2;\gamma)$},
axis background/.style={fill=white},
xmajorgrids,
ymajorgrids,
legend style={legend cell align=left, align=left, draw=white!15!black}
]
\addplot [color=black, thick]
  table[row sep=crcr]{%
0	                0.6667\\
0.526315789473684	0.4629\\
1.05263157894737	0.3214\\
1.57894736842105	0.2232\\
2.10526315789474	0.1549\\
2.63157894736842	0.1076\\
3.15789473684211	0.0747\\
3.68421052631579	0.0519\\
4.21052631578947	0.0360\\
4.73684210526316	0.0250\\
5.26315789473684	0.0174\\
5.78947368421053	0.0121\\
6.31578947368421	0.0084\\
6.84210526315789	0.0058\\
7.36842105263158	0.0040\\
7.89473684210526	0.0028\\
8.42105263157895	0.0019 \\
8.94736842105263	0.0014\\
9.47368421052632	0.0009\\
10	0.0007\\
};
\addlegendentry{$R_{X_1}$}

\addplot [color=black, thick, dashed]
  table[row sep=crcr]{%
0	0.270426041486378\\
0.526315789473684	0.193270123296634\\
1.05263157894737	0.147236206978196\\
1.57894736842105	0.111575341768706\\
2.10526315789474	0.0774696129173672\\
2.63157894736842	0.0537891332478085\\
3.15789473684211	0.0373471706724107\\
3.68421052631579	0.0259310956138338\\
4.21052631578947	0.0180046227767001\\
4.73684210526316	0.0125010699956052\\
5.26315789473684	0.0086798125666515\\
5.78947368421053	0.00602661581918083\\
6.31578947368421	0.00418443347170251\\
6.84210526315789	0.00290535916083735\\
7.36842105263158	0.00201726515920138\\
7.89473684210526	0.00140063878414086\\
8.42105263157895	0.000972499323993806\\
8.94736842105263	0.000675231148727989\\
9.47368421052632	0.000468830253105065\\
10	0.000325520833333333\\
};
\addlegendentry{$R_{X_2}$}

\end{axis}
\end{tikzpicture}%
\caption{Exact $d^*(2; \gamma)$ for Example~\ref{examp:M2X3}, where  $R_{X_1} = [1/3, 1/3,1/3]$ and $R_{X_2} = [2/3, 1/4, 1/12]$. }
\label{fig:example_M2_X3}
\end{figure}
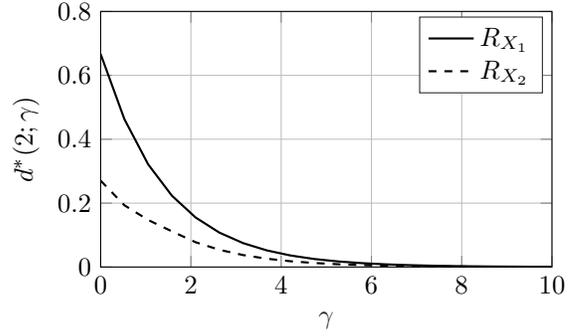

We next try a source distribution with a larger alphabet size by adopting~\cite[Example~3]{shkel2017single}.

\begin{exa}\label{examp:binomial} Let $X$ be a $\text{Binomial}(k, p)$ source
with $k = 100$. Fig.~\ref{fig:example_all_gammas} compares the converse in~\eqref{eq:converse} and the achievability bounds in~\eqref{eq:Exect_Achiev},~\eqref{eq:Last_achiev}, and~\eqref{eq:Second_Last_achiev} for $p=0.5$ and 
$\gamma \in  \{0.1, 1, 2, 4\}$
and it varies $M$. Fig.~\ref{fig:example_M8_vary_g} does a similar comparison for $p=0.1, M=8$, and it varies~$\gamma$.
\end{exa}

\begin{figure}[h!]
\centering

\begin{subfigure}{0.45\textwidth}
\centering
\input{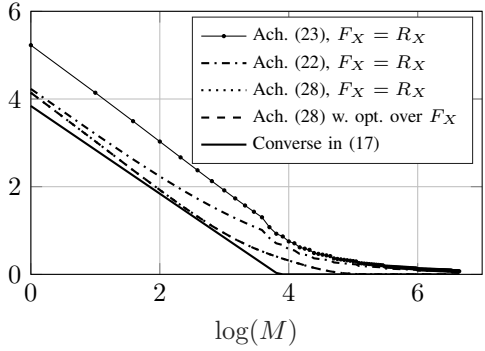}
\caption{$\gamma=1$}
\label{fig:example_g1_vary_M}
\end{subfigure}
\hfill
\begin{subfigure}{0.45\textwidth}
\centering
\input{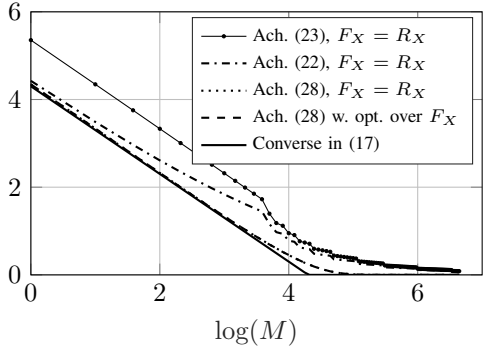}
\caption{$\gamma=0.1$}
\label{fig:example_g01_vary_M}
\end{subfigure}

\vspace{1em}

\begin{subfigure}{0.45\textwidth}
\centering
\input{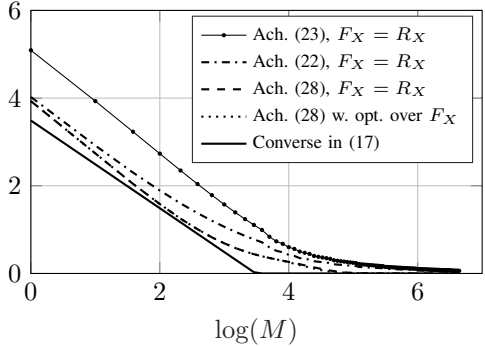}
\caption{$\gamma=2$}
\label{fig:example_g2_vary_M}
\end{subfigure}
\hfill
\begin{subfigure}{0.45\textwidth}
\centering
\input{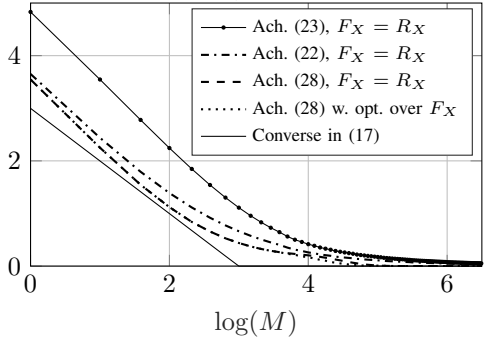}
\caption{$\gamma=4$}
\label{fig:example_g4_vary_M}
\end{subfigure}

\caption{Converse and achievability bounds for the binomial source in Example~\ref{examp:binomial} for different values of $\gamma$.}
\label{fig:example_all_gammas}
\end{figure}

\begin{figure}[h]
\centering
%
%
\definecolor{mycolor1}{rgb}{1.00000,0.00000,1.00000}%
\definecolor{mycolor2}{rgb}{0.00000,1.00000,1.00000}%
\begin{tikzpicture}

\begin{axis}[%
width=6cm,
height=3.4cm,
at={(1.011in,0.642in)},
scale only axis,
xmin=0,
xmax=10,
xlabel style={font=\color{white!15!black}},
xlabel={$\gamma$},
ymin=0,
ymax=2,
axis background/.style={fill=white},
xmajorgrids,
ymajorgrids,
legend style={legend cell align=left, align=left, draw=white!15!black, font =\scriptsize}
]
\addplot [color=black, mark=*, mark size=0.6pt ]
  table[row sep=crcr]{%
0	1.86897353302468\\
0.256410256410256	1.72520148068136\\
0.512820512820513	1.5964368288516\\
0.769230769230769	1.48093490032203\\
1.02564102564103	1.37716494074111\\
1.28205128205128	1.28378305481447\\
1.53846153846154	1.19960863706492\\
1.79487179487179	1.12360383941749\\
2.05128205128205	1.05485567846468\\
2.30769230769231	0.992560437775296\\
2.56410256410256	0.936010066117591\\
2.82051282051282	0.884580311913943\\
3.07692307692308	0.837720368442281\\
3.33333333333333	0.79494383395249\\
3.58974358974359	0.75582081658224\\
3.84615384615385	0.719971036262722\\
4.1025641025641	0.68705779515633\\
4.35897435897436	0.656782704959771\\
4.61538461538461	0.628881073978844\\
4.87179487179487	0.603117869530876\\
5.12820512820513	0.579284182213393\\
5.38461538461539	0.55719412811481\\
5.64102564102564	0.536682133326598\\
5.8974358974359	0.517600552312949\\
6.15384615384615	0.499817577947396\\
6.41025641025641	0.483215406461021\\
6.66666666666667	0.467688625271974\\
6.92307692307692	0.453142795774809\\
7.17948717948718	0.439493206741852\\
7.43589743589744	0.426663777097974\\
7.69230769230769	0.414586089535814\\
7.94871794871795	0.403198538793687\\
8.2051282051282	0.392445580469108\\
8.46153846153846	0.382277068026912\\
8.71794871794872	0.372647667216997\\
8.97435897435897	0.363516338472746\\
9.23076923076923	0.354845879043328\\
9.48717948717949	0.34660251764392\\
9.74358974358974	0.338755555307116\\
10	0.331277046903441\\
};
\addlegendentry{Ach.~\eqref{eq:Last_achiev}, $F_X=R_X$}

\addplot [color=black, dashdotted, thick]
  table[row sep=crcr]{%
0	1.56106951246009\\
0.256410256410256	1.40868607432553\\
0.512820512820513	1.27462837286063\\
0.769230769230769	1.15652333985444\\
1.02564102564103	1.0523170924776\\
1.28205128205128	0.960230997116218\\
1.53846153846154	0.878723876083927\\
1.79487179487179	0.806459488657824\\
2.05128205128205	0.742278541735777\\
2.30769230769231	0.685174591485985\\
2.56410256410256	0.634273288220858\\
2.82051282051282	0.588814494569525\\
3.07692307692308	0.548136873722734\\
3.33333333333333	0.511664601684787\\
3.58974358974359	0.478895906461797\\
3.84615384615385	0.449393179117787\\
4.1025641025641	0.422774437644331\\
4.35897435897436	0.398705955474781\\
4.61538461538461	0.37689589296609\\
4.87179487179487	0.357088792898985\\
5.12820512820513	0.339060820549685\\
5.38461538461539	0.322615645624371\\
5.64102564102564	0.307580877716264\\
5.8974358974359	0.293804979282376\\
6.15384615384615	0.281154590732357\\
6.41025641025641	0.269512211323401\\
6.66666666666667	0.25877418737532\\
6.92307692307692	0.248848966040661\\
7.17948717948718	0.239655578642176\\
7.43589743589744	0.231122322557687\\
7.69230769230769	0.22318561490506\\
7.94871794871795	0.21578899495586\\
8.2051282051282	0.208882255369544\\
8.46153846153846	0.202420685063043\\
8.71794871794872	0.196364408875256\\
8.97435897435897	0.190677811205508\\
9.23076923076923	0.185329032545095\\
9.48717948717949	0.180289529320717\\
9.74358974358974	0.175533688761666\\
10	0.171038491617808\\
};
\addlegendentry{Ach.~\eqref{eq:Second_Last_achiev}, $F_X=R_X$}

\addplot [color=black, thick, dashed]
  table[row sep=crcr]{%
0	0.926678158905707\\
0.256410256410256	0.802029616707351\\
0.512820512820513	0.706302335579836\\
0.769230769230769	0.631313279343153\\
1.02564102564103	0.571386830825779\\
1.28205128205128	0.52255307601459\\
1.53846153846154	0.482010366728699\\
1.79487179487179	0.447762815940995\\
2.05128205128205	0.418374253433097\\
2.30769230769231	0.392800168836712\\
2.56410256410256	0.370272188811423\\
2.82051282051282	0.350218154407906\\
3.07692307692308	0.332206468455839\\
3.33333333333333	0.315907088633669\\
3.58974358974359	0.301064005937702\\
3.84615384615385	0.287475695730514\\
4.1025641025641	0.274981136190271\\
4.35897435897436	0.263449737867867\\
4.61538461538461	0.252774037233913\\
4.87179487179487	0.242864355233422\\
5.12820512820513	0.233644861222729\\
5.38461538461539	0.225050648139088\\
5.64102564102564	0.217025539777214\\
5.8974358974359	0.20952043143087\\
6.15384615384615	0.202492021630473\\
6.41025641025641	0.19590183258808\\
6.66666666666667	0.189715445267575\\
6.92307692307692	0.183901895189809\\
7.17948717948718	0.178433189556822\\
7.43589743589744	0.173283916705273\\
7.69230769230769	0.168430926445099\\
7.94871794871795	0.163853065326729\\
8.2051282051282	0.159530954889386\\
8.46153846153846	0.155446803886352\\
8.71794871794872	0.151584247654161\\
8.97435897435897	0.147928209402217\\
9.23076923076923	0.1444647793985\\
9.48717948717949	0.141181108925176\\
9.74358974358974	0.138065316554419\\
10	0.135106404807126\\
};
\addlegendentry{Ach.~\eqref{eq:Exect_Achiev}, $F_X= R_X$}

\addplot [color=black, dotted, thick]
  table[row sep=crcr]{%
0	0.926678158905707\\
0.256410256410256	0.802029616707351\\
0.512820512820513	0.706302335579836\\
0.769230769230769	0.631313279343153\\
1.02564102564103	0.571386830825779\\
1.28205128205128	0.52255307601459\\
1.53846153846154	0.482010366728699\\
1.79487179487179	0.447762815940995\\
2.05128205128205	0.415351993384161\\
2.30769230769231	0.384064804746933\\
2.56410256410256	0.361478569552778\\
2.82051282051282	0.332156121240441\\
3.07692307692308	0.318547098048289\\
3.33333333333333	0.296316880552112\\
3.58974358974359	0.282823474268607\\
3.84615384615385	0.256196838102942\\
4.1025641025641	0.240468775016587\\
4.35897435897436	0.229782504424074\\
4.61538461538461	0.21732344199738\\
4.87179487179487	0.202425365924364\\
5.12820512820513	0.191648327936008\\
5.38461538461539	0.182688903510854\\
5.64102564102564	0.16682071836134\\
5.8974358974359	0.154514196696357\\
6.15384615384615	0.154398630932603\\
6.41025641025641	0.136508899093823\\
6.66666666666667	0.121484879230668\\
6.92307692307692	0.130912898086112\\
7.17948717948718	0.118492528581412\\
7.43589743589744	0.104072707330257\\
7.69230769230769	0.101404937400595\\
7.94871794871795	0.102777055876082\\
8.2051282051282	0.0957358871383032\\
8.46153846153846	0.090873730739845\\
8.71794871794872	0.0770527942267144\\
8.97435897435897	0.0842326683271192\\
9.23076923076923	0.074876248410419\\
9.48717948717949	0.0730595502961868\\
9.74358974358974	0.0742258227152651\\
10	0.0598234471935863\\
};
\addlegendentry{Ach.~\eqref{eq:Exect_Achiev} w. opt. over $F_X$}

\addplot [color=black, thick]
  table[row sep=crcr]{%
0	0.86897353302468\\
0.256410256410256	0.706141922718349\\
0.512820512820513	0.56982444572697\\
0.769230769230769	0.444294694027503\\
1.02564102564103	0.32972418876405\\
1.28205128205128	0.226466779476781\\
1.53846153846154	0.133405371347466\\
1.79487179487179	0.0495331752802962\\
2.05128205128205	0\\
2.30769230769231	0\\
2.56410256410256	0\\
2.82051282051282	0\\
3.07692307692308	0\\
3.33333333333333	0\\
3.58974358974359	0\\
3.84615384615385	0\\
4.1025641025641	0\\
4.35897435897436	0\\
4.61538461538461	0\\
4.87179487179487	0\\
5.12820512820513	0\\
5.38461538461539	0\\
5.64102564102564	0\\
5.8974358974359	0\\
6.15384615384615	0\\
6.41025641025641	0\\
6.66666666666667	0\\
6.92307692307692	0\\
7.17948717948718	0\\
7.43589743589744	0\\
7.69230769230769	0\\
7.94871794871795	0\\
8.2051282051282	0\\
8.46153846153846	0\\
8.71794871794872	0\\
8.97435897435897	0\\
9.23076923076923	0\\
9.48717948717949	0\\
9.74358974358974	0\\
10	0\\
};
\addlegendentry{Converse in~\eqref{eq:converse}}

\end{axis}
\end{tikzpicture}%
\caption{Bounds for the binomial source in Example~\ref{examp:binomial}, $M\!=\!8$.}
\label{fig:example_M8_vary_g}
\end{figure}
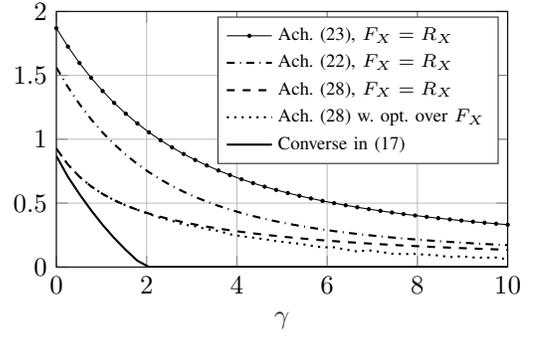

\section{Conclusion}
\label{sec:Conclusions}
This work addresses the problem of lossy source coding with a novel distortion function: the focal loss. The focal loss is a modified version of the log loss, which incorporates an additional term that de-emphasizes examples with higher soft values. In the single-shot setting, this work provides both converse and achievability bounds. In the asymptotic regime, it is shown that the focal term has no impact, and the distortion-rate trade-off for the focal loss aligns with that of the log loss. Numerical examples illustrate the effect of the focal term in the non-asymptotic setting.

Several future directions are worth exploring. First, while the converse is tight in the large blocklength regime, it may require improvement in the non-asymptotic setting. Second, this work focuses on the average distortion, and addressing the excess distortion poses interesting challenges. Third, the multiterminal version (e.g., CEO) of the problem could yield valuable insights. Finally, considering negative values of $\gamma$ is intriguing, as the focal term in such cases prevents soft values from approaching one, introducing confusion among several values - this is a setup with potential privacy applications.

\begin{appendices}

\section{Proof of Proposition~\ref{prop:max_entropy}} 
\label{app:MaxEntrPrinc}
As was already alluded to earlier, the proof of Proposition~\ref{prop:max_entropy} relies on the observation that quantities of the form 
\begin{equation} \log \left( \sum_{x \in \cX} f(P_X(x)) \right),
\end{equation} where $f: [0,1] \to [0,1]$ is continuous and has a single inflection point on $[0,1]$, have a special optimizing structure.  Results of this type can be found in~\cite[Prob.~21]{mildorf_inequalities},\cite{aops_problem_solving,math_stack_max_ent}.

Indeed the function $f_\gamma(x) = x^{(1-x)^\gamma}$ has this structure as stated next\footnote{The proof of Lemma~\ref{lem:convex_concave} is a simple, yet rather tedious, exercise and it is left to the reader.}.
\begin{lemma}\label{lem:convex_concave} Let $x \in  [0,1]$ and $\gamma \geq 0$, and define
\begin{equation}
    f_\gamma(x) = x^{(1-x)^\gamma}. 
\end{equation}
Then, $f_\gamma(x)$ is convex on $[0,c]$ and concave on $[c,1]$ for some $c \in (0,1)$.
\end{lemma}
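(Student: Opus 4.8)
The plan is to reduce the statement to the sign of one explicit function and then to show that this sign changes exactly once on $(0,1)$.

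\emph{Reduction.} The case $\gamma=0$ is immediate since $f_0(x)=x$ is affine (take $c=1/2$). Assume $\gamma>0$ and write $f_\gamma=e^{g}$ with $g(x)=(1-x)^\gamma\ln x$. A direct computation gives $g'(x)=\tfrac{(1-x)^{\gamma-1}}{x}\bigl(1-x-\gamma x\ln x\bigr)$, and on $(0,1)$ one has $1-x>0$ and $-\gamma x\ln x>0$, so $g'>0$ there; hence $f_\gamma'=f_\gamma g'>0$ as well. Since $f_\gamma''=f_\gamma\bigl(g''+(g')^2\bigr)$ and $f_\gamma>0$, the convexity/concavity of $f_\gamma$ is governed entirely by $\phi:=g''+(g')^2$. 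Computing $g''$ and clearing the positive common factor $(1-x)^{\gamma-2}/x^2$, one finds that $\operatorname{sign}\phi(x)$ equals the sign of the elementary function
\[
\Psi(x)=\gamma(\gamma-1)\,x^2\ln x-2\gamma x(1-x)-(1-x)^2+(1-x)^\gamma\bigl(1-x-\gamma x\ln x\bigr)^2 .
\]
So it suffices to show $\Psi$ is positive on an initial interval $(0,c)$ and negative on $(c,1)$.

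\emph{Boundary behaviour.} Expanding $\ln f_\gamma=(1-x)^\gamma\ln x$ near $x=0$ gives $f_\gamma(x)=x-\gamma x^2\ln x+o(x^2\ln x)$, hence $f_\gamma''(x)=-2\gamma\ln x-3\gamma+o(1)\to+\infty$, so $\Psi>0$ near $0$. Setting $\varepsilon=1-x$, near $x=1$ one has $\ln f_\gamma=\varepsilon^\gamma\ln(1-\varepsilon)=-\varepsilon^{\gamma+1}+O(\varepsilon^{\gamma+2})$, hence $f_\gamma(x)=1-(1-x)^{\gamma+1}+o\bigl((1-x)^{\gamma+1}\bigr)$ and $f_\gamma''(x)\sim-\gamma(\gamma+1)(1-x)^{\gamma-1}<0$, so $\Psi<0$ near $1$. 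By continuity $\Psi$ (equivalently $f_\gamma''$) has at least one zero in $(0,1)$.

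\emph{Uniqueness of the sign change (the crux).} I would show that every zero $x_0\in(0,1)$ of $\phi$ is a strict downcrossing, i.e.\ $\phi'(x_0)<0$. Such a zero is then isolated, and since between two consecutive zeros the sign of $\phi$ is constant, two consecutive zeros cannot both be downcrossings; together with $\phi>0$ near $0$ and $\phi<0$ near $1$ this forces exactly one zero $c\in(0,1)$, after which $\{x:f_\gamma''(x)\ge0\}=[0,c]$, proving the lemma. Now $\phi'=g'''+2g'g''$, and at a zero $g''(x_0)=-(g'(x_0))^2$, so $\phi'(x_0)<0$ is equivalent to the inequality $g'''(x_0)<2\,(g'(x_0))^3$. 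This fails at general $x$, so the constraint $g''(x_0)+(g'(x_0))^2=0$ must genuinely be used: substituting the explicit expressions for $g',g'',g'''$ (each a combination of powers of $(1-x)$, powers of $x^{-1}$, and $\ln x$) and eliminating $\ln x_0$ through the constraint, which is quadratic in $\ln x_0$, reduces the claim to a one-variable inequality in $x_0$ and $\gamma$, to be checked by an elementary case analysis split on the sign of $\gamma-1$ (the sign of the coefficient $\gamma(\gamma-1)$) and on whether $x_0$ is small or close to $1$. Equivalently one may work directly with $\Psi$: show $\Psi'(x_0)<0$ at each zero $x_0$ of $\Psi$ after using $\Psi(x_0)=0$ to delete one transcendental term, or show $\Psi$ is eventually strictly decreasing with no zeros beyond that point.

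\emph{Main obstacle.} The boundary asymptotics are routine; the obstacle is exactly this uniqueness step, which is the ``tedious'' computation the footnote alludes to. The mild risk is that eliminating $\ln x_0$ leaves an inequality that itself needs one further differentiation; I would keep this under control by noting that after the substitution the relevant quantities are rational functions of $x_0$ times a single power $(1-x_0)^\gamma$ and at most quadratic in $\ln x_0$, so the analysis remains finite and can be pushed through in the two regimes $0<\gamma\le1$ and $\gamma>1$. This yields $c\in(0,1)$ with $f_\gamma$ convex on $[0,c]$ and concave on $[c,1]$.
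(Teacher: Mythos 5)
Your reduction is sound: writing $f_\gamma=e^{g}$ with $g(x)=(1-x)^\gamma\ln x$, the sign of $f_\gamma''$ on $(0,1)$ is that of $\phi=g''+(g')^2$; your expression for $\Psi$ after clearing the positive factor $(1-x)^{\gamma-2}/x^2$ checks out; and the boundary asymptotics ($\phi>0$ near $0$, $\phi<0$ near $1$) are correct, so a sign change exists. But the lemma is not the existence of an inflection point --- it is that there is \emph{exactly one}, and that is precisely the step you never carry out. You state that you ``would show'' every zero $x_0$ of $\phi$ satisfies $\phi'(x_0)<0$, note that the equivalent inequality $g'''(x_0)<2\,(g'(x_0))^3$ fails without invoking the constraint $g''(x_0)=-(g'(x_0))^2$, and then defer the resulting one-variable inequality in $(x_0,\gamma)$ to an unspecified ``elementary case analysis,'' while conceding it may need a further differentiation. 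Nothing is verified at that point, so what is actually established is only convexity near $0$ and concavity near $1$, which does not exclude additional inflection points in between. This is the entire content of the lemma, and it is missing.

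The chosen route is also strictly stronger than what is being proved: the lemma is compatible with $\phi$ having a tangential zero ($\phi'(c)=0$) at its unique sign change for some value of $\gamma$, in which case ``every zero is a strict downcrossing'' would be false even though the lemma is true, and your argument could not close. A more robust organization would be to show directly that $\Psi$ vanishes at most once, e.g.\ by exhibiting a positive multiplier $m(x)$ for which $m\Psi$ is strictly monotone, or by a Rolle-type count bounding the zeros of $\Psi$ through the zeros of a suitable derivative after the logarithmic term has been isolated. Note that the paper itself gives no proof to compare against --- the footnote calls the lemma ``a simple, yet rather tedious, exercise'' and leaves it to the reader --- and your sketch reproduces exactly the simple part while leaving the tedious part, the single-crossing argument, undone.
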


The following theorem is key to finding the maximum entropy like principle. 
\begin{theorem} \label{thm:maximiziation_theorem}
Let $x_1, x_2, \cdots, x_n$ be $n$ real numbers such that:
\begin{itemize}
    \item $x_1 \leq x_2 \leq \cdots \leq x_n$,
    \item $x_1, x_2, \cdots, x_n \in [a, b]$,
    \item $x_1 + x_2 + \cdots + x_n = C$ \quad (where $C$ is a constant),
\end{itemize}
and let $f$ be a function defined on $[a, b]$, such that $f$ is convex on $[a, c]$ and concave on $[c, b]$. 
Define 
\begin{equation}
\label{eq:FFunc}
F = f(x_1) + f(x_2) + \cdots + f(x_n).
\end{equation}
Then, $F$ is maximal at $x_1 = x_2 = \cdots = x_{k-1} = a$ and $x_{k+1} = \cdots = x_n$, for some $k = 1, 2, \cdots, n$.
\end{theorem}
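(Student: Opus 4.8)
The plan is to combine a compactness argument with two ``smoothing'' operations, organized as an induction on $n$. Since $f$ is continuous and the feasible set $\mathcal{F}=\{(x_1,\dots,x_n):\, a\le x_1\le\cdots\le x_n\le b,\ \sum_i x_i=C\}$ is nonempty and compact, $F$ in \eqref{eq:FFunc} attains a maximum on $\mathcal{F}$; I fix a maximizer $x^\star=(x_1,\dots,x_n)$ and set $F^\star=F(x^\star)$. Since $F$ depends only on the multiset of entries and membership in $\mathcal{F}$ amounts to all entries lying in $[a,b]$ with sum $C$, I may freely re-sort after any modification.

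Next I would isolate two moves, each turning a maximizer into another maximizer. \emph{(i) Concave averaging:} if some of the entries all lie in $[c,b]$, replacing each by their common mean does not decrease $F$ (Jensen's inequality, using concavity of $f$ on the convex set $[c,b]$) and leaves that mean in $[c,b]$; since $F^\star$ is the global maximum, the outcome is again a maximizer. \emph{(ii) Convex slide:} suppose two entries satisfy $a<x_i\le x_j<c$, and consider $g(\delta)=f(x_i-\delta)+f(x_j+\delta)$ for $\delta\in[-\delta',\delta'']$ with $\delta''=\min(x_i-a,\,c-x_j)>0$ and $\delta'=\min(x_j-a,\,c-x_i)>0$. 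On this interval both perturbed entries stay in $[a,c]$, so $g$ is convex there; every perturbed vector is feasible (after re-sorting), hence $g(\delta)\le g(0)=F^\star-\sum_{l\ne i,j}f(x_l)$ for every admissible $\delta$. A convex function attaining its maximum over an interval at an interior point is constant, so $g$ is constant on $[-\delta',\delta'']$, and I may slide $\delta$ up to $\delta''$ without changing $F$; at $\delta=\delta''$ at least one of the two entries has moved onto $\{a\}$ or onto $\{c\}$.

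Finally I would induct on $n$, the case $n=1$ being vacuous. Given a maximizer $x^\star$: if $x_1=a$, then $(x_2,\dots,x_n)$ maximizes the $(n-1)$-variable instance with the same $a,b,c$ and total $C-a$ (prepending $a$ to any strictly better tail would beat $x^\star$), so the induction hypothesis plus prepending $a$ give $x^\star$ the claimed shape. If $x_1>a$, then every entry exceeds $a$: when $x_2\ge c$ (in particular when $x_1\ge c$), the entries $x_2,\dots,x_n$ lie in $[c,b]$ and move~(i) makes them all equal, yielding the form with $k=1$; when instead $x_1,x_2\in(a,c)$, move~(ii) produces another maximizer in which either some entry equals $a$ (reducing to the first case) or the two smallest entries are $x_1\in(a,c)$ and $x_2=c$ (reducing to the case $x_2\ge c$). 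Tracing through, every maximizer is reduced in finitely many steps either to an $(n-1)$-variable instance or directly to a vector of the asserted form $x_1=\cdots=x_{k-1}=a$, $x_{k+1}=\cdots=x_n$.

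The step I expect to be the real work is move~(ii): one must check carefully that the perturbed vectors genuinely stay feasible, so that \emph{global} maximality (not merely a local condition) forces $g$ to be constant, and then keep the bookkeeping straight across the boundary cases — $x_i$ hitting $a$ versus $x_j$ hitting $c$, ties among the entries, and an empty concave block — so that the case analysis and the induction close without circular dependence. For the intended application to $H_\gamma$, Lemma~\ref{lem:convex_concave} supplies precisely the convex/concave splitting of $f=f_\gamma$ that this argument requires.
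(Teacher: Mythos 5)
Your proposal is correct and rests on the same two ingredients as the paper's proof: a two-point ``spreading'' step that pushes any pair of entries lying in the convex region $(a,c)$ out to the endpoints $\{a,c\}$ without decreasing $F$, followed by Jensen's inequality to equalize the entries in the concave region $[c,b]$. The only differences are in packaging --- you justify the spreading step by noting that a convex function maximized at an interior point of an interval is constant there (where the paper invokes the majorization inequality directly), and you organize the reduction as an induction on $n$ rather than the paper's extremal-counting argument over the set of maximizers --- neither of which changes the substance of the argument.
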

\begin{IEEEproof}
    Let $\cS$ contain all the $n$-tuples that maximize $F$ in~\eqref{eq:FFunc}. Also, let  
    \begin{equation}
    M(x_1,\ldots, x_n) =  \sum_{i=1}^n 1_{\{ i: a<x_i<c  \} },
    \end{equation}
    and
 \begin{equation}
\kappa = \max_{(x_1,\ldots, x_n) \in \cS} M(x_1,\ldots, x_n) . 
\end{equation}
We 
start by showing that $\kappa \le 1$. Towards a contradiction, assume that $\kappa \ge 2$. This implies that there exists $x_i, x_j \in (a,c)$ with $i <j$. 
We now distinguish two cases:
\begin{enumerate}
\item $x_i+x_j-c\ge a$: using the majorization inequality, we have that 
\begin{equation}
f(x_i) + f(x_j) \le f(x_i+x_j -c) + f(c). 
\end{equation}
This implies that   $x_i+x_j -c, c $ can be substituted for $x_i,x_j$ and lead to a 
strictly 
smaller value of $\kappa$. This results in a contradiction. 
\item  $x_i+x_j-c < a$: using the majorization inequality, we have that 
\begin{equation}
f(x_i) + f(x_j) \le f(x_i+x_j -a) + f(a) .
\end{equation}
This implies that   $x_i+x_j -a, a $ can be substituted for $x_i,x_j$ and lead to a strictly smaller value of $\kappa$. This results in a contradiction.
\end{enumerate}
Since both cases above result in a contradiction, we have shown that $\kappa \le 1$.  

The above implies that there exists some $k \in [1:n]$ such that $x_1= \ldots= x_{k-1} =a$ and $c \le x_{k+1} \le \ldots \le x_n\le b$. To conclude the proof, note that by Jensen's inequality and the concavity of $f$ on $[c,b]$, we have that
\begin{equation}
    f(x_{k+1})+ \ldots + f(x_n) \le (n-k)  f \left(  \frac{x_{k+1} + \ldots + x_n}{n-k}\right), 
\end{equation}
which holds with equality if $x_{k+1} = \ldots =x_n$. This concludes the proof of Theorem~\ref{thm:maximiziation_theorem}. 
\end{IEEEproof}

Using Theorem~\ref{thm:maximiziation_theorem} and Lemma~\ref{lem:convex_concave}, we have that the maximizer $P_{X^\star}$ of
\begin{equation}
    \max_{P_X : X \in \cX}  H_\gamma(P_X) \label{eq:intermidiate_step} 
\end{equation}
satisfies the following properties: for some $k \in [1: | \cX | ]$
\begin{itemize}
    \item $P_{X^\star}(x_i) =0  ,  \, i \in [1: k-1]$;
    \item $P_{X^\star}(x_k) = p $; 
    \item $P_{X^\star}(x_i) = \frac{1-p}{| \cX| -k}  , \,  i \in [k+1, | \cX| ]$.
\end{itemize}
Clearly, $k=| \cX|$, i.e., $P_X(x_{|\cX|})=1$ and $P_X(x_i)=0, i \in [1:|\cX|-1]$, is not a maximizer since it would lead to $H_\gamma(P_X)=0$. This implies that the maximization in~\eqref{eq:intermidiate_step} is equivalent to~\eqref{eq:expression_for_max}, at the top of the next page.
This proves~\eqref{eq:FirstClaimPropMaxEntr} i.e., the first claim in Proposition~\ref{prop:max_entropy}.
\begin{figure*}[h!]
\begin{align}
& \max_{ \small \begin{array}{c} k \in [1: | \cX|-1] \\  p \in [0,1] \end{array}} \log \left( p^{(1-p)^\gamma} + (| \cX| -k)  \left( \frac{1-p}{|\cX| -k} \right)^{ \left( 1 - \frac{1-p}{| \cX| -k} \right)^\gamma } \right) \notag\\
 &=  \max_{ \small \begin{array}{c} d \in [1: | \cX|-1] \\  p \in [0,1] \end{array}} \log \left( p^{(1-p)^\gamma} + d \left( \frac{1-p}{d} \right)^{ \left( 1 - \frac{1-p}{d} \right)^\gamma } \right). \label{eq:expression_for_max}
\end{align}
\hrule
\end{figure*}

To show~\eqref{eq:SecondClaimPropMaxEntr}, i.e., the upper bound in Proposition~\ref{prop:max_entropy}, we note that  
\begin{align}
d \left( \frac{1}{d} \right)^{ \left( 1 - \frac{1-p}{d} \right)^\gamma } &= d^{ 1 -  \left( 1 - \frac{1-p}{d} \right)^\gamma }  \notag \\
&\stackrel{{\rm{(a)}}}{\leq} d^{ \max(1,\gamma) \frac{1-p}{d}} 
\notag
\\
&\stackrel{{\rm{(b)}}}{\leq}  \rme^{ \max(1,\gamma) \frac{1-p}{\rme}} 
\notag
\\
&\stackrel{{\rm{(c)}}}{\leq}  \rme^{ \frac{\max(1,\gamma) }{\rme}} ,\label{eq:bound_on_p}
\end{align}
where the labeled inequalities follow from:
$\rm{(a)}$ using the fact that 
\begin{align}
1 -  \left( 1 - \frac{1-p}{d} \right)^\gamma   \le 
 \left \{ \begin{array}{cc}    \frac{ \gamma (1-p)}{d}  & \gamma \ge 1,  \\
  \frac{1-p}{d}& \gamma \in (0,1), \end{array} \right.   
\end{align} 
where for $\gamma \ge 1$ we have used the Bernoulli inequality  (i.e., $(1+x)^\gamma  \ge 1+ \gamma x,\, \gamma \ge 1, \, x \ge -1$ ), and for $\gamma \in (0,1)$ we have used the fact that $\left( 1 - \frac{1-p}{d} \right)^\gamma  \ge \left( 1 - \frac{1-p}{d} \right)$;
$\rm{(b)}$ maximizing over $d>0$; 
and $\rm{(c)}$
the bound $1-p \le 1$.

Now, combining~\eqref{eq:expression_for_max} and~\eqref{eq:bound_on_p}, and using the fact that $p^{(1-p)^\gamma} \le 1 $ and  $\left( 1-p \right)^{ \left( 1 - \frac{1-p}{d} \right)^\gamma } \le 1 $ we arrive at the following bound
\begin{align}
&\max_{ \small \begin{array}{c} d \in [1: | \cX|-1] \\  p \in [0,1] \end{array}} \log \left( p^{(1-p)^\gamma} + d \left( \frac{1-p}{d} \right)^{ \left( 1 - \frac{1-p}{d} \right)^\gamma } \right) \notag\\
& \le \log \left( 1 + \rme^{ \frac{\max(1,\gamma) }{\rme}}  \right) .
\end{align}
This concludes the proof of Proposition~\ref{prop:max_entropy}.

\end{appendices}

\bibliographystyle{ieeetr}
\bibliography{biblio.bib}
\end{document}